\newif\iflong
\newif\ifshort
\definecolor{lipyel}{rgb}{0.99,0.78,0.07}
\newcommand{\prob}[1]{\textsc{#1}}
\newcommand{\pprob}[2]{\prob{#1}$(#2)$}
\newcommand{\coNP}{\textnormal{coNP}}
\newcommand{\coNPslashpoly}{\textnormal{\coNP{}$/$poly}}
\newcommand{\NP}{\textnormal{NP}}
\newcommand{\PTIME}{\textnormal{P}}
\newcommand{\FPT}{\textnormal{FPT}}
\newcommand{\W}[1]{\textnormal{W[#1]}}
\newcommand{\PnotNP}{$\PTIME{}\neq\NP{}$}
\newcommand{\PeqNP}{$\PTIME{}=\NP{}$}
\newcommand{\NPnotincoNPslashpoly}{$\NP{}\nsubseteq\coNPslashpoly{}$}
\newcommand{\NPincoNPslashpoly}{$\NP{}\subseteq\coNPslashpoly{}$}
\newcommand{\ETH}{\textnormal{ETH}}
\DeclareMathOperator{\tw}{tw}
\DeclareMathOperator{\bw}{bw}
\DeclareMathOperator{\cw}{cw}
\DeclareMathOperator{\poly}{poly}
\DeclareMathOperator{\col}{col}
\newcommand{\YES}{YES}
\newcommand{\NO}{NO}
\theoremstyle{definition}
\newtheorem{defi}{Definition}
\theoremstyle{plain}
\newtheorem{lem}{Lemma}
\newtheorem{thm}{Theorem}
\newtheorem{corollary}{Corollary}
 \newtheorem{proposition}{Proposition}
\theoremstyle{remark}
\newtheorem*{remark}{Remark}
\newcommand{\calD}{\mathcal{D}}
\newcommand{\calA}{\mathcal{A}}
\newcommand{\calB}{\mathcal{B}}
\newcommand{\N}{\mathbb{N}}
\newcommand{\decprob}[3]{
  \begin{center}
    \begin{minipage}{0.95\textwidth}
      \noindent
      \prob{#1}\vspace{2pt}\\
      \setlength{\tabcolsep}{3pt}
      \begin{tabularx}{\textwidth}{@{}lX@{}}
	\normalsize \textbf{Input:} 		& \normalsize #2 \\
	\normalsize \textbf{Question:} 	& \normalsize #3
      \end{tabularx}
    \end{minipage}
  \end{center}
}
\newcommand{\keywords}[1]{\noindent \textbf{Keywords}: #1}
\newcommand{\appendixsection}[1]{%
}
\newcommand{\toappendix}[1]{%
  {{
    #1
  }}
}
\newcommand{\noproof}{}
 \title{Diminishable Parameterized Problems and \\ Strict Polynomial Kernelization}%
 \author[1]{Henning~Fernau}
 \author[2]{Till~Fluschnik\thanks{Supported by the DFG, project DAMM (NI 369/13).}$^,$}
 \author[3]{Danny~Hermelin\thanks{Supported by the People Programme (Marie Curie Actions) of the European Union's Seventh Framework Programme (FP7/2007-2013) under REA grant agreement number 631163.11, and by the ISRAEL SCIENCE FOUNDATION (grant No. 551145/14). Also supported by a DFG Mercator fellowship, project DAMM (NI 369/13) while staying at TU Berlin (August 2016).}$^,$}
 \author[4]{Andreas~Krebs\thanks{Supported by the DFG Emmy Noether program (KR 4042/2).}$^,$}
 \author[2]{Hendrik~Molter\thanks{Partially supported by the DFG, project DAPA (NI~369/12).}$^,$}
 \author[2]{Rolf~Niedermeier}
 \affil[1]{Fachbereich 4 -- Abteilung Informatik, Universit\"at Trier, Germany,
 \texttt{fernau@uni-trier.de}}
 \affil[2]{Institut f\"ur Softwaretechnik und Theoretische Informatik,
  TU Berlin, Germany,
  \texttt{\{till.fluschnik,h.molter,rolf.niedermeier\}@tu-berlin.de}}
 \affil[3]{Ben Gurion University of the Negev, Beersheba, Israel,
 \texttt{hermelin@bgu.ac.il}}
 \affil[4]{Wilhelm-Schickard-Institut f\"ur Informatik, Universit\"at T\"ubingen, Germany,
 \texttt{krebs@informatik.uni-tuebingen.de}}
 \date{}
\begin{document}

\maketitle

\looseness=-1
\begin{abstract}

Kernelization---a mathematical key concept for provably effective polynomial-time preprocessing of \NP{}-hard problems---plays a central role in parameterized complexity and has triggered an extensive line of research. This is in part due to a lower bounds framework that allows to exclude polynomial-size kernels under the assumption of \NPnotincoNPslashpoly{}. In this paper we consider a restricted %
yet %
natural variant of kernelization, namely \emph{strict kernelization}, where one is not allowed to increase the parameter of the reduced instance (the kernel) by more than an additive constant. 

Building on earlier work of Chen, Flum, and M\"{u}ller~[Theory~Comput.~Syst.~2011] and developing a general and remarkably %
simple framework, we show that a %
variety of \FPT{} problems does not admit strict polynomial kernels under the %
weaker assumption of \PnotNP{}. 
In particular, we show that various (multicolored) graph problems and Turing 
machine computation problems do not admit strict polynomial kernels unless \PeqNP{}.
To this end, a key concept we use are \emph{diminishable problems}; these are 
parameterized problems that allow to decrease the parameter of the 
input instance by at least one in polynomial time, thereby outputting an equivalent problem instance.
Finally, we study a relaxation of the notion of strict kernels and reveal its limitations.

\smallskip
\keywords{NP-hard problems, parameterized complexity, kernelization lower bounds, polynomial-time data reduction, Exponential Time Hypothesis}%
\end{abstract}

\section{Introduction}
\label{sec:intro}
\looseness=-1

Kernelization is one of the most fundamental concepts in the field of parameterized complexity analysis. 
Given a (w.l.o.g.\ binarily encoded) instance $(x,k) \in \{0,1\}^* \times \mathbb{N}$ of some parameterized problem $L$, a \emph{kernelization} for~$L$ produces in polynomial time an instance~$(x',k')$ satisfying: $(x',k') \in L \iff (x,k) \in L$ and $|x'| + k' \leq f(k)$ for some fixed computable function~$f(k)$. 
In this way, kernelization can be thought of as a preprocessing procedure that reduces an instance to its ``computationally hard core" (\emph{i.e.},~the \emph{kernel}). 
The function~$f(k)$ is accordingly called the \emph{size} of the kernel, and it is typically the measure that one wishes to minimize. Kernelization is a central concept in parameterized complexity not only as an important algorithmic tool, but also because it provides an alternative definition of \emph{fixed-parameter tractability} (\FPT{}): A parameterized problem is solvable in $f(k) \cdot |x|^{O(1)}$ time if and only if it has a kernel of size $g(k)$ for some arbitrary computable functions $f$ and $g$ only depending on the parameter~$k$~\cite{CaiCDF97}.
An algorithm with running time $f(k) \cdot |x|^{O(1)}$ for a parameterized problem $L$ implies that~$L$ has a kernel of size $f(k)$, but in the converse direction one cannot always take the same function~$f$. 
\iflong{}%
  For example, the \NP{}-complete graph problem \prob{Vertex Cover} parameterized by the solution size~$k$ has a $2k$-vertex kernel~\cite{ChenKJ01}, but an algorithm running in $2k \cdot |x|^{O(1)}$ time for the problem obviously would imply \PeqNP{}. 
\fi{}%
The goal of minimizing the size of the problem kernel leads to the question of what is the kernel with the smallest size one can obtain in polynomial time for a given problem. 
In particular, do all fixed-parameter tractable problems have small kernels, say, of linear or polynomial size?

\iflong{}
  The latter question was answered negatively by Bodlaender \emph{et al.}~\cite{BodlaenderDFH09} who used a lemma of Fortnow and Santhanam~\cite{FortnowS11} to show that various \FPT{} problems, \emph{e.g.}~\prob{Path} parameterized by the solution size, do not admit a polynomial-size kernel (or \emph{polynomial kernel} for short) unless \NPincoNPslashpoly{} (which implies a collapse in the Polynomial Hierarchy to its third level). 
  This led to the exclusion of polynomial kernels for various other problems, and the framework of Bodlaender \emph{et al.}~has been extended in several directions~\cite{Kra14}. Regardless, all of these extensions rely on the assumption that \NPnotincoNPslashpoly{}, an assumption that while widely believed in the community, is a much stronger assumption than \PnotNP{}.
\else{}
  The latter question was answered negatively~\cite{BodlaenderDFH09,FortnowS11} by proving that various problems in~\FPT{} do not admit a polynomial-size kernel (or \emph{polynomial kernel} for short) under the assumption that \NPnotincoNPslashpoly{}. 
  The framework has been extended in several directions~\cite{Kra14}. 
  Regardless, all of these frameworks rely on the assumption that \NPnotincoNPslashpoly{}, an assumption that while widely believed in the community, is a much stronger assumption than \PnotNP{}.
\fi{}

Throughout the years, researchers have considered different variants of kernelization such as \emph{Turing kernelization}~\cite{BFFLSV12,SKMN12}, \emph{partial kernelization}~\cite{BGKN11}, \emph{lossy kernelization}~\cite{LokshtanovPRS17}, and \emph{fidelity-preserving preprocessing}~\cite{FKRS12}. 
In this paper, we consider a variant which has been considered previously quite a bit\footnote{We note that in the literature (e.g.~\cite{LinFCL17,XiaoK17}), this definition can also be found for kernelization.}, which is called proper kernelization~\cite{AbuKhzamF06} or \emph{strict kernelization}~\cite{chen2011lower}:
\begin{defi}[Strict Kernel]
\label{def:strictPK}
A strict kernelization for a parameterized problem $L$ is a polynomial-time algorithm that on input instance~$(x,k) \in \{0,1\}^* \times \mathbb{N}$ outputs an instance~$(x',k')\in \{0,1\}^* \times \mathbb{N}$, the \emph{strict kernel}, satisfying:
\begin{inparaenum}[(i)]
\item $(x, k) \in L \iff (x', k') \in L$,
\item $|x'| \le f(k)$, for some function $f$, and
\item $k' \le k + c$, for some constant $c$.
\end{inparaenum}
We say that $L$ admits a strict \emph{polynomial} kernelization if~$f(k)\in k^{O(1)}$.
\end{defi}
\noindent Thus, a strict kernelization is a kernelization that does not increase the output parameter~$k'$ by more than an additive constant.
While the term ``strict'' in the definition above makes sense mathematically, it is actually quite harsh from a practical perspective. Indeed, most of the early work on kernelization involved applying so-called \emph{data reduction rules} that rarely ever increase the parameter value (see~\emph{e.g.}~the surveys~\cite{GN07,Kra14}). %
Furthermore, strict kernelization is clearly preferable to kernelizations that increase the parameter value in a dramatic way: 
Often a fixed-parameter algorithm on the resulting problem kernel is applied, which running time highly depends on the value of the parameter, and so a kernelization that substantially increases the parameter value might in fact be useless. 
Finally, the equivalence with \FPT{} is preserved: A parameterized problem is solvable in $f(k) \cdot |x|^{O(1)}$ time if and only if it has a strict kernel of size $g(k)$.

Chen, Flum, and M\"{u}ller~\cite{chen2011lower} showed that \prob{Rooted Path}, the problem of finding a path of length~$k$ in a graph that starts from a prespecified root vertex, has no strict polynomial kernel unless \PnotNP{}.\footnote{Chen \emph{et al.}~\cite{chen2011lower} did not allow any increase in the parameter. But this is not a crucial difference.} 
They also showed a similar result for \prob{CNF-Sat} parameterized by the number of variables. 
Both of these results seemingly are the only known polynomial kernel lower bounds that rely on the assumption of \PnotNP{} (see Chen \emph{et al.}~\cite{CFKX07} for a few linear lower bounds that also rely on \PnotNP{}). 
The goal of this paper is to show that Chen~\emph{et~al.}'s framework applies for more problems, indeed allowing for a surprisingly simple, natural, and elegant proof framework. 
Indeed, we consider the relative simplicity compared with the standard framework for kernel lower bounds as a particular virtue of our contribution.
Herein, the (mathematical) simplicity manifests itself in:
\begin{compactitem}
 \item The notion of parameter diminisher is easy to grasp.
 \item The correctness of the framework is easy to understand and to prove.
 \item The framework is easy to apply and to extend.
 \item The complexity-theoretic assumption~\PnotNP{} is the gold-standard and employed in many algorithmic contexts.
\end{compactitem}

\paragraph{Our Results.}
We build on the work of Chen \emph{et al.}~\cite{chen2011lower}, and further develop and widen the framework they presented for excluding strict polynomial kernels. Using this extended framework, we show that several natural parameterized problems in \FPT{} have no strict polynomial kernels under the assumption that \PnotNP{}. 
In particular, the main result of this paper is given in Theorem~\ref{thm:main} below. 
Note that we use the brackets in the problem names to denote the parameter under consideration.\footnote{For a complete list of problem definitions see \Cref{sec:probzoo}.} 
Thus, \pprob{Multicolored Path}{k} is the \prob{Multicolored Path} problem parameterized by the solution size $k$, for instance. %
These conventions will be used throughout our paper. 
\begin{thm}
\label{thm:main}%
Unless \PeqNP{}, each of the following fixed-parameter tractable problems does not admit a strict polynomial kernel:
\begin{compactitem}
\item \pprob{Multicolored Path}{k} and \pprob{Multicolored Path}{k \log n};
\item \pprob{Clique}{\Delta}\noproof{}, \pprob{Clique}{\tw}\noproof{}, \pprob{Clique}{\bw}\noproof{}, and \pprob{Clique}{\cw};
\item \pprob{Biclique}{\Delta}\noproof{}, \pprob{Biclique}{\tw}\noproof{}, \pprob{Biclique}{\bw}\noproof{}, and \pprob{Biclique}{\cw}\noproof{};
\item \pprob{Colorful Graph Motif}{k} and \pprob{Terminal Steiner Tree}{k+|T|}\noproof{};%
\item \pprob{Short NTM Computation}{k+|\Sigma|} and \pprob{Short NTM Computation}{k+|Q|}.
\item \pprob{Short Binary NTM Computation}{k}\noproof{};
\end{compactitem}
Herein, $k$~denotes the solution size, $n$~denotes the number of vertices 
in the graph, %
$\Delta$ denotes the maximum vertex degree in the graph, 
$\tw$, $\bw$, and $\cw$~denote the treewidth, bandwidth, and cutwidth of the graph, respectively, 
$T$~denotes the set of terminals,
$|\Sigma |$~denotes the alphabet size, and~$|Q|$~denotes the number of states.
\end{thm}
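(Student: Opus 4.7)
The plan is to formulate a simple meta-theorem---\emph{NP-hardness plus a parameter diminisher implies no strict polynomial kernel unless} $\PTIME = \NP$---and then verify that each problem in the list satisfies its two hypotheses. A \emph{parameter diminisher} for a parameterized problem $L$ is a polynomial-time procedure that turns any instance $(x,k)$ into an equivalent instance $(x',k')$ with $k' \leq k - 1$ (the instance size $|x'|$ is allowed to grow polynomially). I would first prove the meta-theorem by iterating a strict kernel and a diminisher. Suppose, toward contradiction, that $L$ is NP-hard and has a strict polynomial kernel producing instances of size at most $k^c$ and parameter at most $k + d$ for some constants $c,d$. Composing the diminisher $d+1$ times yields a polynomial-time procedure that decreases the parameter by at least $d+1$ while blowing the instance up only polynomially. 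Alternating this composite diminisher with the strict kernel in each round then decreases the parameter by at least one per round, while the kernel call at the end of each round re-establishes an instance size polynomial in the current parameter. Hence after at most $k$ rounds the parameter drops to zero and the instance has constant size, so the FPT algorithm that comes with the strict polynomial kernel solves it in polynomial time, contradicting NP-hardness unless $\PTIME = \NP$.

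The second and bulk phase of the plan is to exhibit, for every parameterized problem listed in Theorem~\ref{thm:main}, both an NP-hardness reduction and a parameter diminisher. Many of the graph problems share a common core: for \pprob{Clique}{\Delta} and its variants parameterized by $\tw$, $\bw$, $\cw$ I would take an instance $(G,\Delta)$ and build an equivalent graph whose cliques correspond to those of $G$ but whose relevant width measure is exactly one smaller, using standard tools such as subdivision, disjoint-union padding, and controlled gadget replacement. \pprob{Biclique}{\Delta} and its width variants admit analogous constructions because bicliques are essentially symmetric. For \pprob{Multicolored Path}{k} and \pprob{Multicolored Path}{k \log n}, the diminisher should ``absorb'' one color into a forced prefix of the solution path via a small gadget attached to the source vertex, and \pprob{Colorful Graph Motif}{k} admits a similar color-absorbing diminisher. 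For \pprob{Terminal Steiner Tree}{k+|T|} I would force one terminal to be a pendant of every feasible tree and then contract it away. The Turing-machine problems are handled by implementing one symbol or one state via a short gadget over the remaining alphabet or state set, leveraging that a $k$-step NTM computation survives such simulation up to a polynomial blow-up in the tape description.

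The main obstacle I anticipate lies in the graph-width diminishers, because shifting $\tw$, $\bw$, or $\cw$ by exactly one while preserving the existence of a target clique or biclique is delicate: these parameters are monotone under minors or subgraphs but do not lend themselves to a prescribed additive decrement without inadvertently changing other structural quantities. I would try to design a single base construction whose output graph $G'$ satisfies $\tw(G') = \tw(G) - 1$, $\bw(G') = \bw(G) - 1$, and $\cw(G') = \cw(G) - 1$ simultaneously, rather than inventing a separate reduction per width measure; this amounts to peeling off one carefully chosen ``layer'' of a canonical decomposition. A secondary nuisance is the $k\log n$ parameterization of \pprob{Multicolored Path}, where the diminisher's polynomial blow-up of $n$ could offset the decrement in $k$; I would address this by keeping the blow-up a constant factor, so that $\log n$ grows by at most an additive constant per call, which is harmless after the diminisher has been composed a bounded number of times.
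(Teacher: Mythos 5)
Your meta-theorem and its iterate-and-shrink proof match the paper's \Cref{thm:diminisher}, but the diminisher constructions have genuine gaps. The central missing idea is the paper's ``branch and compose'' machinery (\Cref{lem:branchcomp}): a parameter-decreasing branching rule together with a strict composition yields a diminisher, and nearly every diminisher in the paper is assembled this way. In particular, for \prob{Clique} parameterized by $\Delta$, $\tw$, $\bw$, or $\cw$, your plan to build a graph whose width is \emph{exactly} one smaller while preserving cliques is the wrong target (and you correctly flag it as delicate). The paper sidesteps it entirely: branch to the induced subgraphs $G[N(v)]$ over all $v \in V$; a $k$-clique exists in $G$ iff some $G[N(v)]$ has a $(k{-}1)$-clique, and since $v$ is universal in $G[N(v)\cup\{v\}]$ each width parameter strictly drops when $v$ is removed; then compose by disjoint union, under which these width parameters do not increase. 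One only needs the width to decrease \emph{by at least one}, not exactly one, and the neighborhood trick gives that for free.

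Two further steps in your plan would fail. For \pprob{Multicolored Path}{k \log n}, your fix (``keep the blow-up a constant factor so $\log n$ grows by an additive constant'') is incompatible with the diminisher you would actually build: the branching rule creates $\Theta(n^3)$ subinstances (one per candidate multicolored length-$3$ prefix) and disjoint-union composition yields $n'=\Theta(n^4)$, so $k'\log n'=(k-1)\cdot\Theta(\log n)$ need not lie below $k\log n$. The paper instead invokes a meta-result of Chen, Flum, and M\"{u}ller (their Proposition~3.10): for problems solvable in $2^{k^{O(1)}}|x|^{O(1)}$ time, $L(k)$ has a strict polynomial kernel iff $L(k\log |x|)$ does, which transfers the result without a separate diminisher. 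Finally, your Turing-machine sketch aims at shrinking $|\Sigma|$ or $|Q|$ via gadget simulation, but the paper's diminishers decrease $k$: they build a machine simulating two steps of $M$ in one, inflating the state set for the $k{+}|\Sigma|$ parameterization and the tape alphabet for $k{+}|Q|$---in each case enlarging only the component that is not part of the parameter.
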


We give formal definitions of each of these parameterized problems in the following sections. 
For now, let us mention that several of these have prominent roles in previous kernelization lower bound papers.
  For instance, \pprob{Multicolored Path}{k} is a WK[1]-complete problem~\cite{HermelinKSWW15}. 
  The \prob{Colorful Graph Motif} problem has been used to show that several problems in degenerate graphs have no polynomial kernels unless \NPnotincoNPslashpoly{}~\cite{CyganPPW12}
  .
Finally, we also explore how ``tight'' the concept of strict polynomial 
kernels is and, employing the Exponential Time Hypothesis (ETH), 
conclude that we often cannot hope for significantly relaxing the 
concept of strict kernelization to achieve a comparable list of such analogues kernel lower bounds 
under \PnotNP{}.

The main concept behind the new proof framework is that of a \emph{parameter diminisher}: an algorithm that is able to decrease the parameter value of any given instance by at least one in polynomial time. This concept was first observed by Chen, Flum, and M\"{u}ller~\cite{chen2011lower} who called it a \emph{parameter-decreasing polynomial self-reduction}. It is not difficult to show that the existence of a parameter diminisher and a strict polynomial kernel for an \NP{}-hard parameterized problem implies \PeqNP{}. But surprisingly enough, as we will show, there are numerous natural diminishable problems. 
And excluding strict polynomial kernelization is comparatively simple for these problems. 
We remark that for the problems we discuss in this paper, one can exclude polynomial kernels under the assumption that \NPnotincoNPslashpoly{} using the \iflong{}framework of Bodlaender \emph{et al.}~\cite{BodlaenderDFH09}, which is based on a rather indirect lemma of Fortnow and Santhanam~\cite{FortnowS11}\else{}existing frameworks~\cite{BodlaenderDFH09,Kra14}\fi{}. 
\iflong{}%
  Our results base on a weaker assumption, but exclude a more restricted version of polynomial kernels. 
  On the contrary, Bodlaender \emph{et al.}'s framework excludes a more general version of polynomial kernels but requires a stronger assumption. 
\else{}%
  In contrast, our results base on a weaker assumption, but exclude a more restricted version of polynomial kernels.
\fi{}%
Hence, our results are incomparable with the existing no-polynomial-kernel results. 
However, the new framework provides a simpler methodology and directly connects the exclusion of strict polynomial kernels to the assumption that \PnotNP{}.

\iflong{}
\paragraph{Outline.}
The paper is organized as follows. In Section~\ref{sec:framework} we present the basic framework of Chen, Flum, and M\"{u}ller~\cite{chen2011lower}, and further develop and widen it so that we have all necessary tools for proving Theorem~\ref{thm:main}. In particular, we formally define the concept of parameter diminisher and diminishable problems. In Section~\ref{sec:nopk}, we provide our
list of problems without strict polynomial kernels, essentially
proving Theorem~\ref{thm:main}. In Section~\ref{sec:nodiminishers}
we investigate the effect of going from strict polynomial kernels
(parameter may only increase by an additive constant) to ``semi-strict''
polynomial kernels (parameter may increase by a constant factor). We see
that only few problems so far allow for this while we also show that often  
the corresponding ``strong diminishers'' (which would yield 
semi-strict polynomial kernels) do not exist unless the ETH breaks. 
We conclude in Section~\ref{sec:conclusion}.%
\fi{}

\paragraph{Notation.}
We use basic notation from parameterized complexity~\cite{cygan2015parameterized,DowneyF13,FlumG06,Niedermeier06} and graph theory~\cite{Diestel10,west}.
Let $G=(V,E)$ be a graph. For a vertex set~$W\subseteq V$, let $G-W:=(V\setminus W,\{e\in E\mid e\cap W=\emptyset\})$. 
If $W=\{v\}$, then we write $G-v$ instead of $G-\{v\}$. Furthermore, we denote the induced subgraph on vertices $W$ by $G[W]$.
For a vertex~$v\in V$, we denote by~$N_G(v):=\{w\in V\mid \{v,w\}\in E\}$ the neighborhood of~$v$ in~$G$.
If not specified differently, then we denote by~$\log$ the logarithm with base two.

\section{Framework}
\label{sec:framework}
\looseness=-1
\appendixsection{sec:framework}

In this section we present the general framework that we will use throughout the paper. We formally define the notion of a parameter diminisher which is central to the entire paper and we show why this concept strongly relates to the exclusion of polynomial kernels.

\begin{defi}[Parameter Diminisher]
\label{def:diminisher}
A \emph{parameter diminisher} for a parameterized problem~$L$ is a polynomial-time algorithm that maps instances~$(x,k) \in \{0,1\}^* \times \mathbb{N}$ to instances~$(x', k') \in \{0,1\}^* \times \mathbb{N}$ such that $(x, k) \in L \text{ if and only if } (x', k') \in L$ and $k' < k$.
\end{defi}
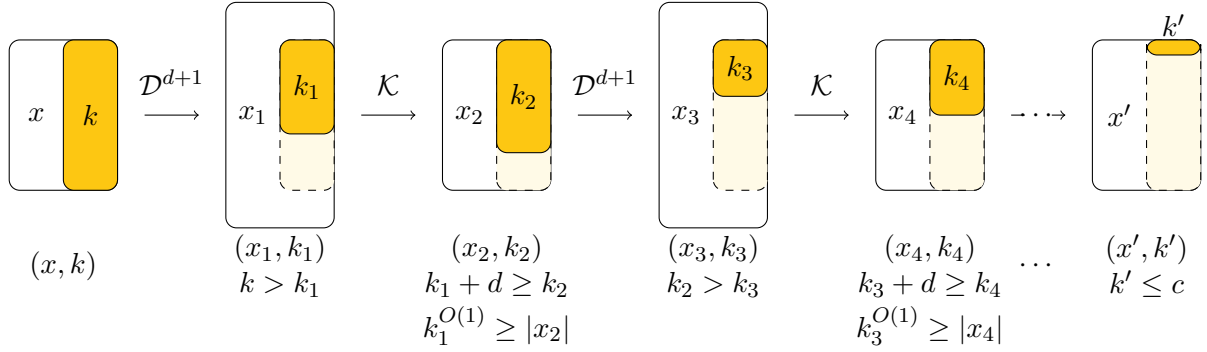
\begin{figure}
 \centering
  \begin{tikzpicture}

  \def\xsc{0.72}
  \def\x{4.0*\xsc}
	  \draw[rounded corners, fill=white] (0, 0) rectangle (2*\xsc, 2);
      \node (i) at (0.5* \xsc,1)[]{$x$};
      \draw[rounded corners,fill=lipyel] (1*\xsc, 0) rectangle (2*\xsc, 2);
      \node (i) at (1.5*\xsc,1)[]{$k$};
	  \node (i) at (1*\xsc,-1)[]{$(x,k)$};

  \draw[->] (2.5*\xsc,1) to node[label=90:{$\calD^{d+1}$}]{}(3.5*\xsc,1);

  \def\mul{1}
		  \draw[rounded corners, fill=white] (0+\x, 0-0.5) rectangle (\x+2*\xsc, 2.5);
      \node (i) at ( \x+0.5* \xsc,1)[]{$x_1$};
	\draw[rounded corners,dashed,fill=lipyel!10!white] (\mul*\x+1*\xsc, 0) rectangle (\mul*\x+2*\xsc, 2);
      \draw[rounded corners,fill=lipyel] (\x+1*\xsc, 0.75) rectangle (\x+2*\xsc, 2);
      \node (i) at (\x+1.5*\xsc,1.375)[]{$k_1$};
	  \node (i) at (\x+1*\xsc,-1)[align=center]{$(x_1,k_1)$ \\ $k>k_1$};

  \draw[->] (\mul*\x+2.5*\xsc,1) to node[label=90:{$\mathcal{K}$}]{}(\mul*\x+3.5*\xsc,1);

  \def\mul{2}
		  \draw[rounded corners, fill=white] (\mul*\x+0, 0) rectangle (\mul*\x+2*\xsc, 2);
      \node (i) at (\mul*\x+0.5* \xsc,1)[]{$x_2$};
	\draw[rounded corners,dashed,fill=lipyel!10!white] (\mul*\x+1*\xsc, 0) rectangle (\mul*\x+2*\xsc, 2);
      \draw[rounded corners,fill=lipyel] (\mul*\x+1*\xsc, 0.5) rectangle (\mul*\x+2*\xsc, 2);
      \node (i) at (\mul*\x+1.5*\xsc,1.25)[]{$k_2$};
	  \node (i) at (\mul*\x+1*\xsc,-1)[align=center]{$(x_2,k_2)$ \\ $k_1+d\geq k_2$};
	   \node (i) at (\mul*\x+1*\xsc,-1.8)[align=center]{$k_1^{O(1)}\geq |x_2|$};
  
  \draw[->] (\mul*\x+2.5*\xsc,1) to node[label=90:{$\calD^{d+1}$}]{}(\mul*\x+3.5*\xsc,1);
  \def\mul{3}
		  \draw[rounded corners, fill=white] (\mul*\x+0, 0-0.5) rectangle (\mul*\x+2*\xsc, 2.5);
      \node (i) at (\mul*\x+0.5* \xsc,1)[]{$x_3$};
	\draw[rounded corners,dashed,fill=lipyel!10!white] (\mul*\x+1*\xsc, 0) rectangle (\mul*\x+2*\xsc, 2);
	  \draw[rounded corners,fill=lipyel] (\mul*\x+1*\xsc, 1.25) rectangle (\mul*\x+2*\xsc, 2);

      \node (i) at (\mul*\x+1.5*\xsc,1.625)[]{$k_3$};
	  \node (i) at (\mul*\x+1*\xsc,-1)[align=center]{$(x_3,k_3)$ \\ $k_2> k_3$};

  \draw[->] (\mul*\x+2.5*\xsc,1) to node[label=90:{$\mathcal{K}$}]{}(\mul*\x+3.5*\xsc,1);
  \def\mul{4}
		  \draw[rounded corners, fill=white] (\mul*\x+0, 0) rectangle (\mul*\x+2*\xsc, 2);
      \node (i) at (\mul*\x+0.5* \xsc,1)[]{$x_4$};
	\draw[rounded corners,dashed,fill=lipyel!10!white] (\mul*\x+1*\xsc, 0) rectangle (\mul*\x+2*\xsc, 2);
	  \draw[rounded corners,fill=lipyel] (\mul*\x+1*\xsc, 1) rectangle (\mul*\x+2*\xsc, 2);

      \node (i) at (\mul*\x+1.5*\xsc,1.5)[]{$k_4$};
	  \node (i) at (\mul*\x+1*\xsc,-1)[align=center]{$(x_4,k_4)$\\ $k_3+d\geq k_4$};
	  \node (i) at (\mul*\x+1*\xsc,-1.8)[align=center]{$k_3^{O(1)}\geq |x_4|$};

  \draw[-] (\mul*\x+2.5*\xsc,1) to (\mul*\x+2.75*\xsc,1);
  \node (ld) at (\mul*\x+3*\xsc,1)[]{$\cdots$};
  \draw[->] (\mul*\x+3.25*\xsc,1) to (\mul*\x+3.5*\xsc,1);
  \node (ld) at (\mul*\x+3*\xsc,-1)[]{$\cdots$};

  \def\mul{5}
		  \draw[rounded corners, fill=white] (\mul*\x+0, 0) rectangle (\mul*\x+2*\xsc, 2);
      \node (i) at (\mul*\x+0.5* \xsc,1)[]{$x'$};
	\draw[rounded corners,dashed,fill=lipyel!10!white] (\mul*\x+1*\xsc, 0) rectangle (\mul*\x+2*\xsc, 2);
	  \draw[rounded corners,fill=lipyel] (\mul*\x+1*\xsc, 1.8) rectangle (\mul*\x+2*\xsc, 2);

      \node (i) at (\mul*\x+1.5*\xsc,2.2)[]{$k'$};
	  \node (i) at (\mul*\x+1*\xsc,-1)[align=center]{$(x',k')$ \\ $k'\leq c$};

  \end{tikzpicture}
  \caption{Illustration to the proof of \Cref{thm:diminisher} for an input instance~$(x,k)$. Herein, $\mathcal{K}$ and $\mathcal{D}$ denote the strict kernelization with additive constant~$d$ and the parameter diminisher, respectively. We represent each instance by boxes, where the size of a box indicates the size of the instance or the value of the parameter (each dashed box refers to~$k$).}
  \label{fig:dim}
\end{figure}
Thus, a parameter diminisher is an algorithm that is able to decrease the parameter of any given instance of a parameterized problem~$L$ in polynomial time. 
\iflong{}The algorithm is given freedom in that it can produce a completely different instance, as long as its an equivalent one (with respect to~$L$) and has a smaller parameter value.\fi{}%
We call a parameterized problem~$L$~\emph{diminishable} if there is a parameter diminisher for $L$. 
The following theorem was proved initially by Chen, Flum, and M\"{u}ller~\cite{chen2011lower}, albeit for slightly weaker forms of diminisher and strict polynomial kernels.

\begin{thm}[\cite{chen2011lower}]
\label{thm:diminisher}%
Let $L$ be a parameterized problem such that its unparameterized version is \NP{}-hard and we have that $\{(x,k)\in L\mid k\le c\}\in \PTIME{}$, for some constant $c$. %
If $L$ is diminishable and admits a strict polynomial kernel, then \PeqNP{}.
\end{thm}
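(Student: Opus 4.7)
The plan is to derive a polynomial-time algorithm for the unparameterized version of $L$; since this version is \NP{}-hard, its existence forces \PeqNP{}. Fix a parameter diminisher~$\calD$ for~$L$ and a strict polynomial kernelization~$\mathcal{K}$ for~$L$ with size function $f(k)\in k^{O(1)}$ and additive constant~$d$. The algorithm is exactly the iteration depicted in Figure~\ref{fig:dim}: one ``round'' applied to an instance $(x_i,k_i)$ first runs $\calD$ exactly $d+1$ times, producing an equivalent instance $(\tilde x_i,\tilde k_i)$ with $\tilde k_i\le k_i-(d+1)$, and then runs $\mathcal{K}$ to obtain $(x_{i+1},k_{i+1})$ with $|x_{i+1}|\le f(\tilde k_i)$ and $k_{i+1}\le\tilde k_i+d\le k_i-1$. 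Rounds are repeated until the parameter drops to at most~$c$, at which point the polynomial-time procedure guaranteed by $\{(x,k)\in L\mid k\le c\}\in\PTIME{}$ is invoked.

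Correctness and termination are immediate from the definitions: each call of $\calD$ and each call of $\mathcal{K}$ preserves membership in $L$, and each round decreases the parameter by at least~$1$, so after at most $k-c$ rounds we reach the base case. The substantive step is the running-time analysis, which hinges on the instance size staying polynomial in~$k$ throughout. A single application of $\calD$ may enlarge the instance, but performing only $d+1=O(1)$ consecutive applications inflates the size by a polynomial factor, and the subsequent call to $\mathcal{K}$ collapses it back to $|x_{i+1}|\le f(\tilde k_i)\le \tilde k_i^{O(1)}\le k^{O(1)}$. Hence every round beyond the first starts from an instance of size $k^{O(1)}$ and runs in $k^{O(1)}$ time, while the initial round runs in $\poly(|x|+k)$ time; with $O(k)$ rounds the total running time is $\poly(|x|+k)$. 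Assuming, as is standard, $k\le|x|$, this is polynomial in~$|x|$, so the unparameterized version of~$L$ lies in~\PTIME{}.

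The main obstacle I anticipate is precisely the size-control argument above: a naive scheme that iterates~$\calD$ without interleaving~$\mathcal{K}$ could let the instance grow like $|x|^{\Theta(k)}$ across the $\Theta(k)$ required diminisher calls, destroying polynomiality. The strict-kernel property is exactly the lever that makes the interleaving work, because the additive $+d$ bound ensures that a single kernelization cannot undo the $(d+1)$-step progress made by the preceding diminisher calls, so the parameter keeps dropping monotonically across rounds rather than oscillating. Were the kernelization permitted to increase~$k$ even multiplicatively, this balance would collapse, which is the reason the theorem is tied to the \emph{strict} variant of polynomial kernelization.
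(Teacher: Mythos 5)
Your proof is correct and matches the paper's own argument: interleave $d+1$ applications of the diminisher with one application of the strict kernelization so that each round nets a decrease of at least one in the parameter while the kernelization caps the instance size at $k^{O(1)}$, and stop once $k\le c$. This is exactly the iteration the paper sketches (and depicts in Figure~\ref{fig:dim}), including the key observation that the constant number of diminisher calls per round keeps the intermediate blow-up polynomial before the kernelization shrinks the instance again.
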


\noindent The idea behind~\Cref{thm:diminisher} is to repeat the following two procedures until the parameter value drops below~$c$ (see \Cref{fig:dim} for an illustration). 
First, apply the parameter diminisher a constant number of times such that when, second, the strict polynomial kernelization is applied, the parameter value is decreased. The strict polynomial kernelization keeps the instances small, hence the whole process runs in polynomial time.
Reductions transfer diminishability from one parameterized problem to another if they do not increase the parameter value and run in polynomial time.

\begin{defi}[Parameter-Non-Increasing Reduction]
 Given two parameterized problems~$L$ with parameter~$k$ and~$L'$ with parameter~$k'$, a \emph{parameter-non-increasing reduction} from~$L$ to~$L'$ is an algorithm that maps each instance $(x,k)$ of $L$ to an equivalent 
 instance~$(x',k')$ of~$L'$ in time polynomial in $|x|+k$ such that $k'\leq k$.
\end{defi}
Note that, in order to transfer diminishability, we need parameter-non-increasing reductions between two parameterized problems in both directions. This is a crucial difference to other hardness results based on reductions.

\begin{lem}%
  \label{lem:dimreduction}
Let~$L_1$ and~$L_2$ be two parameterized problems such that there are parameter-non-increasing reductions from~$L_1$ to~$L_2$ and from~$L_2$ to~$L_1$.
Then we have that~$L_1$ is diminishable if and only if~$L_2$ is diminishable. 
\end{lem}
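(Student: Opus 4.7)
The plan is to prove the statement by symmetry: it suffices to show that if $L_2$ is diminishable then so is $L_1$, since the hypotheses on $L_1$ and $L_2$ are completely symmetric. So let $\mathcal{D}_2$ be a parameter diminisher for $L_2$, let $\mathcal{R}_{12}$ be a parameter-non-increasing reduction from $L_1$ to $L_2$, and let $\mathcal{R}_{21}$ be a parameter-non-increasing reduction from $L_2$ to $L_1$. I will construct a parameter diminisher $\mathcal{D}_1$ for $L_1$ as the composition $\mathcal{R}_{21} \circ \mathcal{D}_2 \circ \mathcal{R}_{12}$.

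The main steps are then routine. First, on input $(x,k)$ of $L_1$, run $\mathcal{R}_{12}$ to obtain an equivalent instance $(x_2, k_2)$ of $L_2$ with $k_2 \le k$. Second, apply $\mathcal{D}_2$ to $(x_2, k_2)$ to get an equivalent instance $(x_2', k_2')$ of $L_2$ with $k_2' < k_2$. Third, apply $\mathcal{R}_{21}$ to $(x_2', k_2')$ to get an instance $(x', k')$ of $L_1$ with $k' \le k_2'$. Chaining the equivalences yields $(x,k) \in L_1 \iff (x', k') \in L_1$, and chaining the inequalities yields $k' \le k_2' < k_2 \le k$, so the output parameter is strictly smaller than the input parameter, as required by \Cref{def:diminisher}.

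The step I expect to require the most care is verifying that the whole composition runs in polynomial time in $|x|+k$. Each of $\mathcal{R}_{12}$, $\mathcal{D}_2$, and $\mathcal{R}_{21}$ is individually polynomial-time; however, a parameter-non-increasing reduction is polynomial in $|x|+k$, so I must bound the size of the intermediate instance after each stage. Since $\mathcal{R}_{12}$ is polynomial in $|x|+k$, we have $|x_2| \le (|x|+k)^{O(1)}$ and $k_2 \le k$; since $\mathcal{D}_2$ is polynomial in its input length, $|x_2'| \le (|x_2|+k_2)^{O(1)} \le (|x|+k)^{O(1)}$ and $k_2' < k_2 \le k$; finally $\mathcal{R}_{21}$ runs in time polynomial in $|x_2'| + k_2'$, producing an instance of size polynomial in $|x|+k$. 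Hence $\mathcal{D}_1$ is a polynomial-time algorithm, completing the verification that it is a parameter diminisher for $L_1$. Applying the same construction with the roles of $L_1$ and $L_2$ swapped proves the converse direction, giving the desired equivalence.
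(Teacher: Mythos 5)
Your proof is correct and follows essentially the same route as the paper's: compose the reduction $L_1 \to L_2$, the diminisher for $L_2$, and the reduction $L_2 \to L_1$, then chain the parameter inequalities to get a strict decrease. The only difference is that you explicitly verify the polynomial running-time bound through the composition, a step the paper's proof leaves implicit.
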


{
  \begin{proof}
    Let~$L_1$ with parameter~$k_1$ and~$L_2$ with parameter~$k_2$ be two parameterized problems.
    Let~$\calA_1$ and $\calA_2$ be parameter-non-increasing reductions from~$L_1$ to~$L_2$ and from~$L_2$ to~$L_1$, respectively.
    Let~$\calD_2$ be a parameter diminisher for~$L_2$. 
    Let $(x_1,k_1)$ be an arbitrary instance of~$L_1$.
    
    Apply $\calA_1$ to $(x_1,k_1)$ to obtain the instance $(x_2,k_2)$ of~$L_2$ with $k_2\leq k_1$.
    Next, apply $\calD_2$ to~$(x_2,k_2)$ to obtain the instance $(x_2',k_2')$ of~$L_2$ with $k_2'<k_2$.
    Finally, apply $\calA_2$ to $(x_2',k_2')$ to obtain the instance~$(x_1',k_1')$ of~$L_1$ with $k_1'\leq k_2'$.
    As $k_1'\leq k_2'<k_2 \leq k_1$, the above combination of $\calA_1$, $\calD_2$, and~$\calA_2$ forms a parameter diminisher for~$L_1$.
    To get the reverse direction, exchange the roles of~$L_1$ and~$L_2$.
  \end{proof}
}

\paragraph{Parameter-Decreasing Branching and Strict Composition.}
To construct parameter diminisher, it is useful to follow a ``branch and compose'' technique: Herein, first \emph{branch} into several subinstances while decreasing the parameter value in each, and then \emph{compose} the subinstances into one instance without increasing the parameter value by more than a constant. 
We first give the definition of branching rule and composition and then prove that those two algorithms combined form a parameter diminisher.

Branching rules are highly common in parameterized algorithm design, and they are typically deployed when using depth-bounded search-tree or related techniques. Roughly speaking, in a \emph{parameter-decreasing branching rule} one reduces the problem instance to several problem instances with smaller parameters such that at least one of these new instances is a yes-instance if and only if the original instance is a yes-instance. 
\begin{defi}[Parameter-Decreasing Branching Rule]
\label{def:branching}
A \emph{parameter-decreasing branching rule} for a parameterized problem $L$ is a polynomial-time algorithm that on input $(x,k) \in \{0,1\}^* \times \mathbb{N}$ outputs a sequence of instances $(y_1,k'),\ldots,(y_t,k') \in \{0,1\}^* \times \mathbb{N}$ such that~$(x,k) \in L \iff (y_i,k') \in L$ for some $i \in \{1,\ldots,t\}$ and $k' < k$.
\end{defi}
Composition is the core concept behind the standard kernelization lower bound framework introduced by Bodlaender \emph{et al.}~\cite{BodlaenderDFH09}. Here we use a more restrictive notion of this concept:

\begin{defi}[Strict Composition]
\label{def:compostion}%
A \emph{strict composition} for a parameterized problem $L$ is an algorithm that receives as input $t$ instances $(x_1,k),\ldots,(x_t,k)\in \{0,1\}^* \times \mathbb{N}$, and outputs in polynomial~time a single instance $(y,k')\in \{0,1\}^* \times \mathbb{N}$ such that
\begin{inparaenum}[(i)]
\item $(y,k') \in L \iff (x_i,k) \in L$ for some $i \in \{1,\ldots,t\}$ and
\item $k' \leq k + c$ for some constant $c$.
\end{inparaenum}
\end{defi}
If we now combine (multiple applications of) a parameter-decreasing branching rule with a strict composition, then we get a parameter diminisher. 

\begin{lem}%
\label{lem:branchcomp}%
Let $L$ be a parameterized problem. If $L$ admits a parameter-decreasing branching rule and a strict composition, then it is diminishable.
\end{lem}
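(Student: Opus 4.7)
My plan is to build a parameter diminisher for $L$ from the given ingredients by composing $c+1$ successive applications of the parameter-decreasing branching rule with one final application of the strict composition. The key insight is that each application of the branching rule strictly decreases the parameter of every output instance by at least~$1$, whereas the strict composition inflates the parameter of its output by at most the additive constant~$c$. Hence, $c+1$ rounds of branching followed by a single composition yield a net parameter decrease of at least~$1$, which is exactly what a parameter diminisher requires.

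Concretely, on input $(x,k)$, the algorithm would construct a branching tree of depth $c+1$ rooted at $(x,k)$, where the children of any internal node are the instances produced by a single application of the branching rule to that node. Since the branching rule is polynomial and produces only polynomially many children per call, and since $c+1$ is a constant, the tree has polynomially many leaves and is built in polynomial time in $|x|+k$. A straightforward induction on the depth of the tree, applied to the ``some child lies in $L$'' disjunction in the definition of the branching rule, shows that $(x,k) \in L$ if and only if at least one leaf lies in~$L$; a parallel induction using the $k' < k$ guarantee shows that every leaf has parameter at most $k - (c+1)$. Feeding all leaves into the strict composition then yields a single instance $(x',k')$ with $(x',k') \in L$ iff some leaf is in~$L$, and with $k' \leq (k - c - 1) + c = k - 1 < k$, exactly as required by the definition of a parameter diminisher.

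The main obstacle I foresee is the uniformity requirement of the strict composition, which expects input instances that all share the same parameter value, while different leaves of the branching tree may a priori carry different parameters since the branching rule guarantees only $k'<k$ rather than an exact decrement. I expect this to be handled either by assuming without loss of generality that each invocation of the branching rule decreases the parameter by exactly~$1$, or by adaptively continuing to branch any leaf whose parameter still exceeds the target value $k-(c+1)$ until all leaves agree at that target; either fix preserves polynomial running time and the chain of equivalences, so the construction still forms a parameter diminisher for~$L$.
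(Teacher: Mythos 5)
Your construction --- iterate the branching rule $c+1$ times to drive the parameter down by at least $c+1$, then apply the strict composition once, paying back at most $c$ --- is exactly the paper's proof. Regarding the uniformity concern you raise: the paper's proof simply writes all $t^{c+1}$ resulting instances with a single common parameter $k^* < k-c$, implicitly treating each branching step as decreasing the parameter by exactly one (which is indeed the case in every application of the lemma in the paper); your flagging of this point is apt, though neither of your proposed patches is fully watertight --- decreasing by exactly one is not literally without loss of generality, since a rule that overshoots cannot be slowed down, and adaptively branching a leaf whose parameter still exceeds the target can likewise overshoot the common target --- so this remains a presentational subtlety shared by the paper's own proof, not a defect particular to yours.
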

{
\begin{proof}
Let $(x, k)$ be an instance of a parameterized problem $L$ and $c$ be the constant associated with a strict composition for~$L$. 
We recursively apply the parameter-decreasing branching rule for~$L$ $c+1$ times to produce $t^{c+1}$ instances $(x_i,k^*)$ with $k^* < k-c$, since every time the parameter is decreased at least by one. 
The strict composition, receiving $t^{c+1}$ instances, produces an instance~$(y, k')$ with~$k'\leq k^*+c<k$ of $L$ which is a yes-instance if and only if $(x, k)$ is a yes-instance. 
Hence, the whole procedure is a parameter diminisher for~$L$.
\end{proof}
}
We remark that \Cref{lem:branchcomp} also holds if we require in both \Cref{def:compostion}(i) and \Cref{def:branching} that the equivalence hold for all $i \in \{1,\ldots,t\}$.
As an example application of \Cref{lem:branchcomp} above, we consider the parameter diminisher for the \pprob{Rooted Path}{k} problem due to Chen~\emph{et al.}~\cite{chen2011lower}. 
In this problem we are given an undirected graph~$G=(V,E)$, a distinguished vertex~$r \in V$, and an integer $k$, and our goal is to determine whether there exists a simple path in $G$ of length $k$ that starts at~$r$. 
Let $v_1,\ldots,v_t$ be the neighbors of~$r$ in $G$. 
The parameter-decreasing branching rule for \pprob{Rooted Path}{k} constructs from $(G,r,k)$ the set of instances~$(G-r,v_1,k-1),\ldots,(G-r,v_t,k-1)$. 
A strict composition for \pprob{Rooted Path}{k} takes as input the instances $(G_1,r_1,k),\ldots,(G_t,r_t,k)$ and constructs the instance~$(G',r,k+1)$, where~$G'$ is the graph obtained by taking the disjoint union of all $G_i$s and making all their roots adjacent to a new root vertex $r$. 
Combining these two algorithms gives the parameter diminisher for \pprob{Rooted Path}{k}.%

\paragraph{On the Exclusion of Non-Uniform Kernelization Algorithms.}
We want to point out that the framework can even be used to exclude strict polynomial kernels computed in \emph{non-uniform} polynomial time (the corresponding complexity class is called $\PTIME / \text{poly}$). With the framework, we can exclude non-uniform strict polynomial kernels under the assumption that $\NP \nsubseteq \PTIME/\text{poly}$. 
\begin{defi}[Non-Uniform Strict Kernel]
\label{def:strictPK}
A non-uniform strict kernelization for a parameterized problem $L$ is a \emph{non-uniform} polynomial-time algorithm that on input instance~$(x,k) \in \{0,1\}^* \times \mathbb{N}$ outputs an instance~$(x',k')\in \{0,1\}^* \times \mathbb{N}$, the \emph{strict kernel}, satisfying:
\begin{inparaenum}[(i)]
\item $(x, k) \in L \iff (x', k') \in L$,
\item $|x'| \le f(k)$, for some function $f$, and
\item $k' \le k + c$, for some constant $c$.
\end{inparaenum}
We say that $L$ admits a non-uniform strict \emph{polynomial} kernelization if $f(k)\in k^{O(1)}$.
\end{defi}
\begin{proposition}
\label{prop:non-uniform}
Let $L$ be a parameterized problem such that its unparameterized version is \NP{}-hard and we have that $\{(x,k)\in L\mid k\le c\}\in \PTIME{}/\textnormal{poly}$, for some constant $c$. %
If $L$ is diminishable and admits a non-uniform strict polynomial kernel, then $\NP \subseteq \PTIME/\textnormal{poly}$.
\end{proposition}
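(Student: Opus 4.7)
The plan is to mimic the proof of Theorem~\ref{thm:diminisher} while replacing the uniform kernel by the non-uniform one, and to derive $\NP\subseteq\PTIME/\textnormal{poly}$ instead of $\PTIME=\NP$. Let $d$ be the additive constant of the non-uniform strict polynomial kernelization $\mathcal{K}$ and let $f(k)\in k^{O(1)}$ bound the size of its output, and let $\calD$ denote the (uniform) parameter diminisher. Given an instance $(x,k)$ of $L$ of total size $n$, I would alternate, as in Figure~\ref{fig:dim}: apply $\calD$ a total of $d+1$ times to drop the parameter by at least $d+1$, then apply $\mathcal{K}$ once (which shrinks the instance but may increase the parameter by at most $d$). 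Each such round decreases the parameter by at least one, so after at most $k-c$ rounds the parameter drops to at most $c$, and the resulting instance is then decided using the assumed $\PTIME/\textnormal{poly}$ algorithm for small parameter values.

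Next, I would verify the size and time bounds. After the first application of $\mathcal{K}$ the instance has size at most $f(k)\in\textnormal{poly}(n)$, and this bound is maintained throughout since $\calD$ is polynomial-time and $\mathcal{K}$ only bounds its output in terms of the current parameter, which never grows. In particular there is a polynomial $q$ such that every intermediate instance has size at most $q(n)$. The total number of rounds is at most $n$, and each round performs polynomially many polynomial-time operations, so the overall running time is polynomial.

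The main obstacle is handling the non-uniform advice. The non-uniform kernel is given by a family of circuits $\{C_m\}_{m\in\N}$ with $|C_m|\in\textnormal{poly}(m)$; as advice on input length $n$ I would hard-wire the concatenation $\langle C_1,C_2,\ldots,C_{q(n)}\rangle$, together with all polynomial-size advice needed by the $\PTIME/\textnormal{poly}$ decision procedure for small-parameter instances up to size $q(n)$. Since each constituent circuit has size $\textnormal{poly}(n)$ and there are only $q(n)\in\textnormal{poly}(n)$ of them, the combined advice is of size $\textnormal{poly}(n)$, and every kernel call during the reduction can be answered by looking up the appropriate circuit. This yields a non-uniform polynomial-time procedure deciding~$L$.

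Finally, since the unparameterized version of $L$ is $\NP$-hard and $\PTIME/\textnormal{poly}$ is closed under polynomial-time reductions, decidability of $L$ in non-uniform polynomial time yields $\NP\subseteq\PTIME/\textnormal{poly}$, proving the proposition.
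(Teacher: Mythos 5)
Your proposal is correct and follows essentially the same approach as the paper's own proof: alternate a constant number of diminisher applications with one kernelization call so each round strictly decreases the parameter, bound all intermediate instance sizes by a fixed polynomial $q(n)$, and bundle the kernel circuits for all lengths up to $q(n)$ (plus the advice for the small-parameter decision procedure) into a single polynomial-length advice string used as a look-up table. The only cosmetic difference is that the paper phrases the per-round application count as $\lceil (c_a+1)/c_d\rceil$ and drops the parameter all the way to $1$, whereas you use $d+1$ and stop at $c$; both are equivalent.
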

We remark that if $\NP \subseteq \PTIME/\textnormal{poly}$, then the Polynomial Hierarchy collapses to its second level~\cite{karp1982turing} (recall that \NPincoNPslashpoly{} implies a collapse in the Polynomial Hierarchy to its third level).
\begin{proof}
 Let $L$ be a parameterized problem whose unparameterized version is \NP{}-hard and it holds that $\{(x,k)\in L\mid k\le c\}\in \PTIME{}/\textnormal{poly}$, for a constant $c\geq 1$. 
 Let $\calD$ be a parameter diminisher for $L$ with constant $c_d\geq 1$ and let $\calA$ be a non-uniform strict polynomial kernelization for~$L$ with constant~$c_a>1$. 
 \iflong{}We show that we can solve any instance $(x,k)$ of $L$, with $k$ being the parameter, in non-uniform polynomial time. \fi{}%
 Let $(x,k)$ be an instance of $L$.
 Apply $\calD$ on $(x,k)$ exactly $c_r:=\lceil (c_a+1)/c_d\rceil$~times to obtain an equivalent instance $\calD^{c_r}(x,k)=(x',k')$ with $k'\leq k-c_d \cdot c_r< k-c_a$.
 Observe that the size of~$|(x', k)|$ is still polynomial in $|(x, k)|$ as $c_r$ is a constant.
 Next, apply $\calA$ on $(x',k')$ to obtain an equivalent instance $(x'',k'')$ with $|(x'', k'')|\leq k'^{c'}$, $c'\geq 1$, and $k''\leq k' + c_a <k$; we explain how to get the advice for $\calA$ in the second half of the proof.
 Repeating the described procedure at most~$k$ times produces an instance $(y,1)$ of $L$, solvable in non-uniform polynomial time. 
 
In the remainder of the proof, we explain how we obtain the advices for the applications of the non-uniform strict polynomial kernelization. 
Given any instance size $n$, we claim that for all instances $(x, k) \in L$ with $|(x, k)| = n$, we can upper-bound the sizes of all instances $(x^*, k^*)$ for which a non-uniform strict polynomial kernel needs to be computed in the above described procedure by a polynomial in $n$. 
Let the running time of the diminisher~$\calD$ be $O(n^{c''})$.
It follows that~$|(x',k')| \le c_O\cdot n^{c''\cdot c_r}$, where~$(x',k')=\calD^{c_r}(x,k)$ and~$c_O$ is the constant hidden in the $O$-notation. 
Furthermore, we have that the size of all subsequent instances the diminisher sequence $\calD^{c_r}$ is applied to is smaller than~$k^{c'}$. 
Thus, the sizes of all instances $(x^*, k^*)$ for which a non-uniform strict polynomial kernel needs to be computed can be upper-bounded by $c_O\cdot n^{c' \cdot c''\cdot c_r}$. 
A list containing all advices for all instance sizes smaller or equal to this bound has polynomial length, since all advices have polynomial length. 
Hence, the overall procedure can take that list as an advice and use it as a look-up table for the advices needed to apply the non-uniform strict polynomial kernels.
\end{proof}

\section{Problems without Strict Polynomial Kernels}
\label{sec:nopk}
\appendixsection{sec:nopk}
\looseness=-1

In this section we prove Theorem~\ref{thm:main} based on several propositions to follow. We present parameter diminishers for most problems mentioned in Theorem~\ref{thm:main}, and for some problems we show that they do not admit strict polynomial kernels (unless \PeqNP{}) using parameter-non-increasing reductions. Note that all parameterized problems we consider are known to be in \FPT{}. 
This section is segmented into three parts, the first dealing with
\prob{Clique}, \textsc{Biclique}, and \prob{Terminal Steiner Tree}, the second with multicolored graph problems,
and the third part deals with non-deterministic Turing machine
computations.
\paragraph{Clique, Biclique, and a Steiner Tree Problem.}\label{ssec:cliqedom}

We begin with the \prob{Clique} problem: Given an undirected graph $G=(V,E)$ and an integer~$k$, determine whether there exist $k$ vertices~$v_1,\ldots,v_k \in V$ such that $\{v_i,v_j\} \in E$ for all~$1\le i < j \le k$. 
Since \pprob{Clique}{k} is \W{1}-complete, we focus on other parameterizations of \prob{Clique} that are contained in~\FPT{}, for instance the maximum degree~$\Delta$ of the input graph, where \prob{Clique} has a simple \FPT{} algorithm\iflong{}: Exhaustively search the closed neighborhood of each vertex individually\fi{}~\cite{cygan2015parameterized}. 
Other parameterizations include treewidth~$\tw=\tw(G)$, bandwidth $\bw=\bw(G)$, and the cutwidth $\cw=\cw(G)$ of the input graph. 
As cutwidth is at least as large as all these parameters in every graph, for our purposes we only recall the definition of cutwidth: 
A graph~$G=(V,E)$ has \emph{cutwidth} at most~$k$ if and only if there exists a linear ordering (layout) $\pi: V \to \{1,\ldots,|V|\}$ of~$G$ where for each real number $1<\alpha<|V|$ we have that for the \emph{cut} at $\alpha$, defined as~$E_{\pi,\alpha}:=\{ \{u,v\} \in E: \pi(u) < \alpha < \pi(v)\}$, it holds that~$|E_{\pi,\alpha}| \leq k$.

\begin{proposition}
\label{lem:cliquecw}
\pprob{Clique}{\cw} is diminishable.
\end{proposition}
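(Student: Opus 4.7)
The plan is to give a direct parameter diminisher for \pprob{Clique}{\cw}, which can be viewed as applying \Cref{lem:branchcomp} with the branching rule ``guess a vertex of the clique'' and the composition rule ``take a disjoint union'', but is cleaner stated directly. Given an input $(G,k)$ with $c := \cw(G) \geq 1$ and $k \geq 2$, I would output the single instance $(G', k-1)$, where
$G' := \bigsqcup_{v \in V(G)} G[N_G(v)]$
is the disjoint union of the open-neighborhood subgraphs of $G$, indexed by the vertices of $G$. The boundary cases $k \leq 1$ and $c = 0$ (so $G$ is edgeless) are trivial: the diminisher returns a fixed constant-size canonical YES- or NO-instance.

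For correctness, $G'$ contains a $(k-1)$-clique if and only if some component $G[N_G(v)]$ does, if and only if there exist a vertex $v \in V(G)$ and $k-1$ pairwise adjacent neighbors of $v$, which together with $v$ form a $k$-clique in $G$. The construction is clearly polynomial-time.

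The crux is that the cutwidth strictly decreases. Since the disjoint union of graphs placed sequentially in a layout (each internally laid out optimally) has no edges crossing between distinct components, one obtains $\cw\bigl(\bigsqcup_{v}G[N_G(v)]\bigr) = \max_{v} \cw(G[N_G(v)])$. It is thus enough to prove the following structural lemma: for every $v \in V(G)$, $\cw(G[N_G(v)]) \leq c - 1$. I would prove this by taking an optimal layout $\pi$ of $G$ of cutwidth $c$, restricting it to $\sigma := \pi|_{N_G(v)}$, and showing every cut of $\sigma$ has size strictly less than $c$. Any edge of $G[N_G(v)]$ crossing a cut of $\sigma$ also crosses the corresponding cut of $\pi$, so the size of the $\sigma$-cut is at most that of the $\pi$-cut, which is at most $c$. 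Suppose for contradiction that equality holds for some cut. Then the $\pi$-cut consists entirely of edges of $G[N_G(v)]$; in particular, no edge incident to $v$ crosses it, so all of $N_G(v)$ lies on the same side of the cut as $v$. But then the $\sigma$-cut, whose edges go between $N_G(v)$-vertices on opposite sides, would be empty, contradicting $c \geq 1$.

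The main obstacle is spotting and verifying this structural lemma; once it is in hand, everything else (equivalence, polynomial running time, and the cutwidth bound for disjoint unions) follows immediately, yielding a parameter diminisher that maps $(G,k)$ to an equivalent instance of strictly smaller cutwidth.
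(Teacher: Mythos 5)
Your proposal is correct and takes essentially the same route as the paper: branch over vertices $v$ to reduce to finding a $(k-1)$-clique in $G[N_G(v)]$, take a disjoint union to combine, and observe that $\cw(G[N_G(v)]) < \cw(G)$ by restricting an optimal layout of $G$. Your proof of this last cutwidth inequality is a bit more explicit than the paper's one-line argument (which goes via the closed neighborhood and notes that deleting $v$ removes at least one edge from every cut), but the underlying idea is identical.
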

\begin{proof}
Let $(G=(V, E), \cw(G), k)$ be an instance of \pprob{Clique}{\cw}. 
The following is a parameter-decreasing branching rule for $(G=(V, E),\cw(G),k)$: 
For each $v \in V$, let $G_v$ denote the subgraph of~$G$ induced by the open neighborhood of $v$, that is, $G_v=G[N(v)]$.
It is not difficult to see that~$G$ has a clique of size $k$ if and only if some $G_v$ has a clique of size $k-1$. We next argue that $\cw(G_v) < \cw(G)$ for all $v \in V$. 
Consider an optimal linear ordering of~$G$, and for any vertex~$v \in V$, consider the ordering restricted to the closed neighborhood of~$v$ in the optimal ordering. 
By removing~$v$, we remove at least one edge in any cut of the linear ordering, giving us a linear ordering of $G_v$ which has cutwidth smaller than the optimal ordering for $G$. 
Hence, $\cw(G_v) < \cw(G)$ for each~$v \in V$, and the above algorithm is indeed a parameter-decreasing branching rule for \pprob{Clique}{\cw}.

For the composition step we take the disjoint union of all graphs. Since a graph has a clique of size $k$ if and only if one of its connected components has a clique of size $k$, and since the cutwidth of a graph $G$ equals the maximum cutwidth of its connected components, this is a strict composition for \pprob{Clique}{\cw}. 
Applying \Cref{lem:branchcomp}, this completes the proof.
\end{proof}
\iflong{}
Note that the above procedure is also a parameter diminisher for parameters $\Delta$, $\tw$, $\bw$, and $k$. 
Leaving the \W{1}-complete \pprob{Clique}{k} problem aside, we get the following corollary.
\begin{corollary}
  \pprob{Clique}{\Delta}, \pprob{Clique}{\tw}, and \pprob{Clique}{\bw} are diminishable.
  \end{corollary}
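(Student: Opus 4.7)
The plan is to invoke \Cref{lem:branchcomp} three times, reusing the branching rule and strict composition from the proof of \Cref{lem:cliquecw} verbatim: on input $(G,k)$ the branching rule outputs $\{(G_v, k-1)\}_{v \in V(G)}$ with $G_v := G[N(v)]$, and the composition takes the disjoint union of its input graphs. Correctness of both (a $k$-clique contains each of its vertices, so we obtain a $(k-1)$-clique in some $G_v$; a disjoint union has a $k$-clique iff one of its components does) is independent of the graph parameter under consideration and so transfers directly. What remains, for each of $\Delta$, $\tw$, and $\bw$, is just to verify (i)~the parameter value of every $G_v$ is strictly smaller than that of~$G$, and (ii)~the parameter of a disjoint union equals the maximum over its components, so that the composition is strict with additive constant~$0$. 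Part~(ii) is immediate in all three cases; all the work is in part~(i).

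For $\Delta$ I would argue directly: every $u \in N(v)$ loses its edge to~$v$ when passing to $G[N(v)]$, so $\deg_{G_v}(u) \le \deg_G(u) - 1 \le \Delta(G) - 1$, giving $\Delta(G_v) \le \Delta(G) - 1$. For $\bw$ I would fix an optimal linear layout~$\pi$ of~$G$ and take the rank-restricted layout $\pi'$ on~$N(v)$. Because every neighbor of~$v$ sits in $[\pi(v)-\bw(G),\,\pi(v)+\bw(G)] \setminus \{\pi(v)\}$, any edge $\{a,b\} \in E(G_v)$ either has both endpoints on one side of~$\pi(v)$, in which case already $|\pi(a)-\pi(b)| \le \bw(G)-1$, or else straddles~$\pi(v)$, in which case the rank restriction compresses the distance by at least one since the position of~$v$ itself is skipped by $\pi'$. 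Either way $|\pi'(a)-\pi'(b)| \le \bw(G) - 1$, and therefore $\bw(G_v) \le \bw(G) - 1$. The corner case $N(v)=\emptyset$, together with the degenerate setting in which the relevant parameter of~$G$ is already zero (so $G$ is edgeless and any clique instance with $k \ge 2$ is trivially \NO{}), I would dispose of up front.

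The main obstacle will be the treewidth case, where $\tw(G_v) < \tw(G)$ is not apparent from $G_v$ being an induced subgraph of~$G$ alone. My plan is to pass through $G[N[v]]$, which is still an induced subgraph of~$G$ (so $\tw(G[N[v]]) \le \tw(G)$) and in which $v$ is a universal vertex. I would then invoke the folklore identity $\tw(H + u) = \tw(H) + 1$ for a new universal vertex~$u$ adjoined to a non-empty graph~$H$, applied to $H = G_v$ and $u = v$. The $\le$ direction is trivial (add $v$ to every bag), while the $\ge$ direction I would justify by a short bramble argument: any bramble of~$G_v$ of order $\tw(G_v)+1$ remains a bramble of $G[N[v]]$, and since $v$ is universal the singleton $\{v\}$ touches every element of it, so adjoining $\{v\}$ yields a bramble of $G[N[v]]$ of order $\tw(G_v)+2$ (any hitting set must contain the outside vertex~$v$ in addition to hitting the original bramble), hence $\tw(G[N[v]]) \ge \tw(G_v)+1$ by the Seymour--Thomas theorem. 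Combining gives $\tw(G_v) + 1 = \tw(G[N[v]]) \le \tw(G)$, i.e.\ $\tw(G_v) \le \tw(G) - 1$. With part~(i) thus established for all three parameters, \Cref{lem:branchcomp} immediately delivers the corollary.
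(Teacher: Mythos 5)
Your proposal is correct and matches the paper's approach: the paper simply asserts (in a one-line remark preceding the corollary) that the branching rule $v \mapsto G[N(v)]$ and disjoint-union composition from the cutwidth proof also work for $\Delta$, $\tw$, and $\bw$, and you reuse exactly those and correctly verify the strict parameter decrease in each case (degree counting for $\Delta$, a restricted layout for $\bw$, and the universal-vertex/bramble argument for $\tw$).
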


  It is easy to show that the parameter diminisher presented for \prob{Clique} can be adapted to the \textsc{Biclique} problem: Given an undirected bipartite graph $G=(A \uplus B, E)$ and an integer $k$, find~$k$ vertices $a_1,\ldots,a_k \in A$ and $b_1,\ldots,b_k\in B$ such that $\{a_i,b_j\} \in E$ for all $1 \leq i , j \leq k$. Thus, we have:

  \begin{corollary}
  \label{lem:bicliquecw}
  \pprob{Biclique}{\Delta}, \pprob{Biclique}{\tw}, \pprob{Biclique}{\bw}, and \pprob{Biclique}{\cw} are diminishable.
  \end{corollary}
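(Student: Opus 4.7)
The plan is to adapt \Cref{lem:cliquecw} to \prob{Biclique} by branching on edges rather than on vertices: a single-vertex neighborhood in a bipartite graph is an independent set and thus cannot host a smaller biclique, so the natural analogue is to branch on a chosen biclique edge. Given $(G = (A \uplus B, E), k)$, for every edge $\{u, v\} \in E$ with $u \in A$ and $v \in B$ I would emit the instance $(G_{uv}, k-1)$ where $G_{uv} := G[(N(v) \setminus \{u\}) \cup (N(u) \setminus \{v\})]$. Equivalence is immediate: any $k$-biclique $(A', B')$ of $G$ contains some edge $\{u, v\}$ with $u \in A'$ and $v \in B'$, and $(A' \setminus \{u\}, B' \setminus \{v\})$ is a $(k-1)$-biclique of $G_{uv}$; conversely, every $(k-1)$-biclique of $G_{uv}$ extends back to a $k$-biclique of $G$ by re-adding $u$ and $v$.

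The key structural property is that every vertex of $V(G_{uv})$ is adjacent in $G$ to $u$ or to $v$ by construction, and this \emph{witness-edge} feature drives the strict parameter decrease for each $X \in \{\Delta, \tw, \bw, \cw\}$. For $\Delta$ it immediately yields $\Delta(G_{uv}) \leq \Delta(G) - 1$. For $\cw$, take an optimal linear ordering $\pi$ of $G$ and restrict it to $V(G_{uv})$: in any cut of the restricted ordering with nonempty sides $L$ and $R$, the corresponding cut of $\pi$ carries at least one additional edge, because either $u, v$ sit on opposite sides of that $\pi$-cut and the edge $\{u, v\}$ itself is extra, or $u, v$ share a side and then any vertex in the opposite side of $V(G_{uv})$ (which is nonempty) is adjacent to whichever of $u, v$ sits on its side, contributing a crossing edge. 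Thus $\cw(G_{uv}) \leq \cw(G) - 1$. An analogous per-edge-span argument for $\bw$ gives $\bw(G_{uv}) < \bw(G)$, and for $\tw$ one argues via the intermediate graph $H := G[N(u) \cup N(v)]$, in which $\{u, v\}$ forms a \emph{bi-universal edge} ($u$ is adjacent to every $B$-vertex of $H$, $v$ to every $A$-vertex), whence a Clique-style universal-vertex argument applied to this edge yields $\tw(G_{uv}) \leq \tw(H) - 1 \leq \tw(G) - 1$.

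The strict composition is the disjoint union of the branched instances with the inherited bipartition: bipartite graphs are closed under disjoint union, $k$-biclique existence is componentwise, and each of $\Delta, \tw, \bw, \cw$ of a disjoint union equals the maximum over components, giving composition constant~$0$. Invoking \Cref{lem:branchcomp} on this parameter-decreasing branching rule together with the disjoint-union composition yields a parameter diminisher for each of the four parameterizations. The main technical obstacle I anticipate is the strict decrease for $\tw$: the Clique-style universal-vertex trick must be upgraded to a bi-universal-edge trick inside $H$, which requires more careful manipulation of an optimal tree decomposition than the clean per-cut argument that works directly for $\cw$.
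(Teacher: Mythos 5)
Your approach matches the paper's (which only hints at the construction, saying the Clique diminisher "can be adapted" by considering "the neighborhoods of two adjacent vertices"): branch on an edge $\{u,v\}$, pass to the induced bipartite graph on $(N(v)\setminus\{u\})\cup(N(u)\setminus\{v\})$, and compose by disjoint union. The branching equivalence, the $\Delta$ and $\cw$ bounds, and the composition are all correct as you wrote them.

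Two small points worth making precise. For $\bw$, the per-edge-span argument needs one extra case beyond the Clique analogue: if $a\in N(v)\setminus\{u\}$, $b\in N(u)\setminus\{v\}$, $\pi(a)<\pi(b)$, and $u,v$ lie on opposite sides of the interval $[\pi(a),\pi(b)]$ with $v<a<b<u$, then neither $\{v,a\}$ nor $\{u,b\}$ need have larger span than $\{a,b\}$; there you must invoke the edge $\{u,v\}$ itself, whose span strictly exceeds that of $\{a,b\}$. For $\tw$, the obstacle you flag dissolves cleanly if you contract the edge $\{u,v\}$ in $H:=G[N(u)\cup N(v)]$: the contracted vertex $z$ is universal in $H/\{u,v\}$, treewidth is minor-monotone so $\tw(H/\{u,v\})\le\tw(H)\le\tw(G)$, the standard universal-vertex lemma gives $\tw\bigl((H/\{u,v\})-z\bigr)\le\tw(H/\{u,v\})-1$, and $(H/\{u,v\})-z = H-u-v = G_{uv}$, yielding $\tw(G_{uv})\le\tw(G)-1$. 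With these details filled in, the proof is complete and in the spirit the paper intended.
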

\else{} 
 Note that the above procedure is also a parameter diminisher for parameters $\Delta$, $\tw$, $\bw$, and~$k$.
 Moreover, one can adapt the parameter diminisher for the aforementioned parameters for the \textsc{Biclique} problem: Given an undirected bipartite graph $G=(A \uplus B, E)$ and an integer $k$, decide whether there are $k$~vertices $a_1,\ldots,a_k \in A$ and $b_1,\ldots,b_k\in B$ such that $\{a_i,b_j\} \in E$ for all $1 \leq i , j \leq k$. 
 To adapt, we consider the neighborhoods of two adjacent vertices in the parameter-decreasing branching~rule.
\fi{}
We next consider an \NP-complete variant of the well-known \textsc{Steiner Tree} problem: given an undirected graph $G=(V=N\uplus T,E)$ ($T$ is called the terminal set) and a positive integer~$k$, decide whether there is a subgraph~$H\subseteq G$ with at most $k+|T|$ vertices such that~$H$ is a tree containing all vertices in~$T$. 
The variant we consider is the \textsc{Terminal Steiner Tree (TST)}~\iflong{}\cite{LinX02,BiniazMS15}\else{}\cite{LinX02}\fi{} problem, which additionally requires the terminal set~$T$ to be a subset of the set of leaves of the tree~$H$. 
For the sake of completeness, we prove the following.

\begin{lem}
\pprob{Terminal Steiner Tree}{k+|T|} is fixed parameter tractable. 
\end{lem}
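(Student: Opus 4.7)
My plan is to prove fixed-parameter tractability by color coding, exploiting the fact that any solution tree $H$ has at most $k + |T|$ vertices: the $|T|$ terminals (forced to be leaves) plus at most $k$ Steiner vertices from $V \setminus T$. Since the whole candidate solution has size bounded by the parameter, this is a small-subgraph-search problem amenable to color coding.

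I plan to color the non-terminal vertices $V \setminus T$ with $k$ colors via a $(|V \setminus T|, k)$-perfect hash family of size $2^{O(k)} \log n$, so that for at least one coloring in the family the Steiner vertices of a fixed solution $H$ receive pairwise distinct colors. For each coloring, I would then run a Dreyfus--Wagner-style dynamic program. The table is indexed by triples $(S, C, v)$ with $S \subseteq T$, $C \subseteq \{1, \ldots, k\}$, and $v \in V$, and stores a Boolean $B(S, C, v)$ indicating whether $G$ contains a subtree rooted at $v$ whose terminal vertices are exactly $S$ (all appearing as leaves) and whose non-terminal vertices carry exactly the colors in $C$ (pairwise distinct). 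The transitions are the standard ``combine two subtrees at a common root $v$'' and ``extend from a child $u$ up to a new root $v \in N_G(u)$'' operations, with the crucial restriction that $v$ must lie in $V \setminus T$; terminals then only appear through the singleton base case $B(\{t\}, \emptyset, t) = \textnormal{true}$, which forces each terminal to be a leaf of the reconstructed tree. The algorithm accepts iff $B(T, C, v) = \textnormal{true}$ for some $v$ and some $C$ with $|C| \leq k$.

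A standard Dreyfus--Wagner analysis shows the DP runs in $3^{k + |T|} \cdot n^{O(1)}$ time; multiplied by the $2^{O(k)} \log n$ size of the hash family, the overall running time is $2^{O(k + |T|)} \cdot n^{O(1)}$, which is fixed-parameter tractable in $k + |T|$. The main obstacle I foresee is correctly setting up the combine transition: two subtrees rooted at a common non-terminal $v$ share both $v$ itself and its color $c(v)$, so the rule should read $B(S_1 \cup S_2, C_1 \cup C_2, v) \leftarrow B(S_1, C_1, v) \wedge B(S_2, C_2, v)$ with $S_1 \cap S_2 = \emptyset$ and $C_1 \cap C_2 = \{c(v)\}$. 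Once this subtlety and the non-terminal-root restriction in the transitions are in place, soundness and completeness follow by a routine induction on the number of vertices of the computed subtree.
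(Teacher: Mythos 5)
Your proof is correct in its essentials but takes a genuinely different route from the paper. The paper does not design a new algorithm at all: it gives an FPT-reduction from \pprob{Terminal Steiner Tree}{k+|T|} to ordinary \pprob{Steiner Tree}{k'+|T|} by replacing every edge $\{v,t\}$ incident to a terminal $t$ with a path of length $2(k+|T|)$ and setting $k'=|T|\cdot(2(k+|T|)-1)+k$. The long paths make it too expensive for any terminal to be internal in a solution of size $\le k'+|T|$, so the leaf constraint is enforced ``for free'' and the classical Dreyfus--Wagner algorithm for \prob{Steiner Tree} is invoked as a black box. Your approach instead builds the leaf constraint directly into a color-coding plus Dreyfus--Wagner dynamic program, using $k$ colors on $V\setminus T$ and states $B(S,C,v)$ with the root of every combine/extend step forced to be non-terminal and terminals entering only through the singleton base case. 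Both yield $2^{O(k+|T|)}\cdot n^{O(1)}$ time; the paper's reduction is shorter and more modular, while yours is self-contained and arguably more transparent about where the leaf condition enters.

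Two small points worth tightening if you were to write this out in full. First, as stated your DP has no base case for non-terminal singletons; this is actually harmless because any non-terminal leaf can be pruned from an optimal solution, but that WLOG step should be said explicitly. Second, and more substantively, your transitions never produce a state $B(S,C,v)$ with $v\in T$ and $|S|\ge 2$, so the DP as written misses solutions that contain no Steiner vertices at all. A tree all of whose vertices are leaves has at most two vertices, so this only affects $|T|\le 2$ (e.g., $T=\{t_1,t_2\}$ with $\{t_1,t_2\}\in E$), which you should dispatch separately before running the main algorithm — the paper's own diminisher proof similarly special-cases $|T|\le 2$ via a shortest path. With these two degenerate cases handled, soundness follows from the fact that the constraint $C_1\cap C_2=\{c(v)\}$ together with the injectivity of the coloring on the putative solution guarantees that two combined branches intersect only in $v$, so the union is acyclic, and every terminal retains degree one.
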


\begin{proof}
We give an FPT-reduction from \pprob{TST}{k+|T|} to \pprob{Steiner Tree}{k'+|T|}. As \pprob{Steiner Tree}{k'+|T|} is fixed-parameter tractable~\cite{DreyfusW71}, the claim follows.

Let $(G=(N\uplus T,E),k)$ be an instance of \pprob{TST}{k+|T|}.
We construct an equivalent instance~$(G'=(N'\uplus T,E'),k')$ of \pprob{Steiner Tree}{k'+|T|} as follows.
Let $G'$ be initially a copy of~$G$.
For each~$t\in T$, apply the following.
For each edge~$\{v,t\}\in E$, remove~$\{v,t\}$ from~$G'$ and add a path of length~$2(k+|T|)$ to~$G'$ with endpoints~$v$ and~$t$.
Set~$k'=|T|\cdot(2(k+|T|)-1)+k$.
This finishes the reduction.
Clearly, the construction can be done in FPT-time.

We show that $(G=(N\uplus T,E),k)$ is a yes-instance of \pprob{TST}{k+|T|} if and only if $(G'=(N'\uplus T,E'),k')$ is a yes-instance of \pprob{Steiner Tree}{k'+|T|}.

Let~$H$ be a terminal Steiner tree in~$G$ with $\ell\leq k+|T|$ vertices.
We construct a Steiner tree~$H'$ in~$G'$ from~$H$ with $\ell'\leq k'+|T|$ vertices as follows.
Recall that each~$t\in T$ has exactly one neighbor~$v_t$ in~$H$.
Hence, obtain~$H'$ by replacing for each~$t\in T$ the edge~$\{v_t,t\}\in E(H)$ by the path of length~$2(k+|T|)$ connecting~$v_t$ with~$t$.
It is not difficult to see~$H'$ is a Steiner tree in~$G'$.
Moreover, $\ell'=|V(H')|=|V(H)|+|T|(2(k+|T|)-1)\leq k+|T|+|T|(2(k+|T|)-1)=k'+|T|$.

Conversely, let~$H'$ be a minimum Steiner tree in~$G'$ with $\ell'\leq k'+|T|$ vertices.
We state some first observations on~$H'$.
Observe that no inner vertex of the paths added in the construction step from~$G$ to~$G'$ is a leaf of~$H'$ (as otherwise~$H'$ is not minimum).
Moreover, as~$H'$ contains each~$t\in T$,~$H'$ contains a path of length~$2(k+|T|)$ for each~$t\in T$.
Suppose that there is a terminal~$t\in T$ such that~$t$ is not a leaf in~$H'$.
Then~$H'$ contains at least~$|T|+(|T|+1)\cdot(2(k+|T|)-1)=|T|\cdot(2(k+|T|)-1)+2(k+|T|)-1=k'+(k+|T|)-1>k'+|T|$, yielding a contradiction.
Hence, each terminal~$t\in T$ forms a leaf in~$H'$.
We show how to obtain a terminal Steiner tree~$H$ in~$G$ from~$H'$ with at most~$k+|T|$ vertices.
As each terminal~$t\in T$ forms a leaf in~$H'$, there is exactly one neighbor~$v_t\in N_G(t)$ such that~$H'$ contains the path of length~$2(k+|T|)$ connecting~$v_t$ with~$t$.
Replace for each~$t\in T$ the path of length~$2(k+|T|)$ connecting~$v_t$ with~$t$ in~$H'$ by the edge~$\{v_t,t\}\in E$ to obtain~$H$ from~$H'$.
Note that~$H$ is a terminal Steiner tree.
Moreover, $|V(H)|=|V(H')|-|T|(2(k+|T|)-1)\leq |T|+|T|\cdot(2(k+|T|)-1)+k-|T|(2(k+|T|)-1)=k+|T|$.
\end{proof}

We can reduce \pprob{Steiner tree}{k+|T|} to \pprob{Terminal Steiner Tree}{k'+|T|} by adding to each terminal a pendant leaf.
Set the terminal set in TST to the set of added leaf vertices, and ask for a terminal Steiner tree of size $k'=k+|T|$.
This is a polynomial parameter transformation, and hence refutes the existence of a polynomial problem kernel~\cite{DLS14} for TST under \NPnotincoNPslashpoly.

\begin{proposition}%
\label{lem:tstdim}
\pprob{Terminal Steiner Tree}{k+|T|} is diminishable.
\end{proposition}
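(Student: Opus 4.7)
The plan is to apply \Cref{lem:branchcomp}, so I will separately design a parameter-decreasing branching rule and a strict composition for \pprob{Terminal Steiner Tree}{k+|T|}. Writing $K = k + |T|$ for the parameter, the case $|T| \le 2$ is polynomial-time solvable (for $|T|=2$ by a shortest-path computation between the two terminals), so in this regime the diminisher can short-circuit to a fixed trivial \YES{} or \NO{} instance of parameter strictly below $K$. From here on I assume $|T| \ge 3$.

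For the branching rule I would pick an arbitrary terminal $t \in T$ and note that in any terminal Steiner tree $H$ of $(G,T,k)$ the leaf $t$ has a unique neighbor, and this neighbor must be a non-terminal, because two leaves cannot be adjacent in a tree on three or more vertices. I therefore enumerate every non-terminal neighbor $v \in N_G(t) \setminus T$ and, for each such $v$, output the instance $(G_v, T_v, k-1)$ obtained from $G - t$ by attaching a fresh pendant $t_v$ to $v$ and setting $T_v := (T \setminus \{t\}) \cup \{t_v\}$. Correctness reduces to a bijection: terminal Steiner trees of $(G,T,k)$ in which $t$ is attached to $v$ correspond, via the exchange $t \leftrightarrow t_v$, to terminal Steiner trees of $(G_v, T_v, k-1)$. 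Each output has parameter $(k-1) + |T_v| = K-1$, and the number of outputs is at most $|N_G(t)| \le |V(G)|$, so the rule is parameter-decreasing and runs in polynomial time.

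For the strict composition I exploit that any single branching step produces instances with a common terminal-set size $s$, so after relabeling vertices I may assume the input instances $(G_i, T_i, k)$ share a single terminal set $T = \{u_1, \dots, u_s\}$. I then form $G$ with vertex set $T \sqcup \bigsqcup_i N_i$ (where $N_i := V(G_i) \setminus T$) and edge set $\bigcup_i E(G_i)$, and output $(G,T,k)$; the parameter is again $k+s$, giving additive constant $c=0$. The direction ``some $(G_i,T,k)$ is \YES{} $\Rightarrow$ $(G,T,k)$ is \YES{}'' is immediate since every $G_i$ is a subgraph of $G$. For the converse, take an inclusion-minimal terminal Steiner tree $H$ in $G$, so that $H$ has no non-terminal leaves; since $|T| \ge 3$, no edge of $H$ joins two terminals either. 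Hence deleting the $s$ terminal leaves of $H$ yields a connected subtree on non-terminals only, and because $G$ contains no edge between different $N_i$'s, this subtree lies inside a single $N_i$. The terminal-to-non-terminal edges of $H$ must then also be edges of $G_i$, so $H$ is already a terminal Steiner tree of $(G_i,T,k)$.

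Plugging the branching rule and the strict composition into \Cref{lem:branchcomp} with $c=0$ (one branching step followed by one composition) delivers the desired parameter diminisher. The principal obstacle is the converse direction of the composition: I have to argue carefully that inclusion-minimality of $H$ combined with the assumption $|T| \ge 3$ forces the non-terminals of $H$ to form a connected subgraph, which then collapses into a single $N_i$ by edge-disjointness of the non-terminal parts across the $G_i$'s. Once this observation is in place, both the leaf property of terminals and the reconstruction $t \leftrightarrow t_v$ in the branching rule become routine.
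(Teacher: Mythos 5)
Your strict composition is essentially the paper's (identify the shared terminal set across instances, take the disjoint union of the non-terminal parts, keep $k$), and your argument for the converse direction—using inclusion-minimality and $|T|\ge 3$ to force the non-terminals of $H$ into a single $N_i$—is correct and matches the paper's reasoning. The problem is the branching rule.

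Your branching rule maps $(G,T,k)$ to $(G_v,T_v,k-1)$ by deleting $t$ and attaching a fresh pendant $t_v$ to $v$. But the exchange $t\leftrightarrow t_v$ is a \emph{vertex-count-preserving} bijection between terminal Steiner trees: a tree $H$ in $G$ with $t$ hanging off $v$ has exactly as many vertices as the corresponding tree $H_v$ in $G_v$. Meanwhile the budget dropped from $k+|T|$ to $(k-1)+|T_v| = (k-1)+|T|$. So if $(G,T,k)$ is a \YES-instance whose \emph{minimum} terminal Steiner tree has exactly $k+|T|$ vertices, every $(G_v,T_v,k-1)$ is a \NO-instance, and the forward direction of Definition~\ref{def:branching} fails. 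In other words, you have only renamed $t$; nothing in the instance actually got smaller, so you are not entitled to decrement $k$. A correct parameter-decreasing branching rule must genuinely shrink the witness by one vertex. The paper achieves this by branching on the non-terminal neighbor $v_i$ of a chosen terminal $t^*$ and then \emph{deleting} $v_i$ from the graph while turning $N_G(v_i)$ into a clique, so that a solution tree passing through $v_i$ can be rerouted around the hole with one fewer vertex; the reverse direction reinserts $v_i$. That deletion-plus-clique step, together with the preliminary assumptions (each terminal has a non-terminal neighbor, and no single vertex is adjacent to all of $T$), is exactly the idea your proposal is missing.
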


{
  \begin{proof}
  We present a parameter-decreasing branching rule and a strict composition for \pprob{Terminal Steiner Tree}{k+|T|}. 
  Together with~\Cref{lem:branchcomp}, the claim then follows.
  Let~$(G=(N\uplus T,E), k)$ be an instance of \pprob{Terminal Steiner Tree}{k+|T|} (we can assume that $G$ has a connected component containing~$T$).
  
  We make several assumptions first.
  We can assume that~$|T|\geq 3$ (otherwise a shortest path is the optimal solution) and additionally that for all terminals $t\in T$ it holds that~$N_G(t)\not\subseteq T$ (as otherwise the instance is a no-instance).
  Moreover, we can assume that there is no vertex~$v\in N$ such that $T\subseteq N_G(v)$, as otherwise we immediately output whether $k\geq 1$.
  
  Select a terminal $t^*\in T$, and let $v_1,\ldots,v_d$ denote the neighbors of~$t^*$ in~$G-(T\setminus\{t^*\})$.
  We create $d$ instances $(G_1,k-1),\ldots,(G_d,k-1)$ as follows.
  Define~$G_i$, $i\in[d]$, by $G_i:=G-v_i$. 
  Turn the vertices in~$N_G(v_i)$ in~$G_i$ into a clique, that is, for each distinct vertices $v,w\in N_G(v_i)$ add the edge~$\{v,w\}$ if not yet present. 
  This finishes the construction of~$G_i$.
  It is not hard to see that the construction can be done in polynomial time.

  We show that~$G$ has a terminal Steiner tree of size~$k+|T|$ if and only if there is an~$i\in[d]$ such that~$G_i$ admits a terminal Steiner tree of size~$k-1+|T|$. 
  Suppose that~$G$ has a terminal Steiner tree~$H$ of size~$k+|T|$.
  As~$t^*$ is a leaf in~$H$, there is exactly one neighbor $v_i$, $i\in[d]$, being the neighbor of~$t^*$ in~$H$.
  Let~$w$ be a neighbor of~$v_i$ in~$H-T$ and let $A:=N_H(v_i)$ (note that~$t^*\in A$).
  Then~$H_i$ forms a terminal Steiner tree in~$G_i$, where~$H_i$ is the tree obtained from~$H$ by deleting~$v_i$ and connecting~$w$ with all vertices in~$A$.
  Moreover, $H_i$ is of size~$k-1+|T|$.

  Conversely, let $G_i$ admit a terminal Steiner tree~$H_i$ of size~$k-1+|T|$.
  As~$t^*$ is a leaf in~$H_i$, there is exactly one vertex~$w$ being the neighbor of~$t^*$ in~$H_i$.
  We obtain a terminal Steiner tree~$H$ in~$G$ from~$H_i$ as follows.
  If every edge in~$H_i$ is also present in~$G$, then~$H:=H_i$ also forms a terminal Steiner tree in~$G$.
  Otherwise, there is an inclusion-wise maximal edge set~$E'\subseteq E(H_i)$ such that $E'\cap E(G)=\emptyset$.
  Observe that by construction, the set of endpoints of~$E'$ forms a subset of~$N_G(v_i)$.
  Let initially~$H$ be a copy of~$H_i$. 
  Delete from~$H$ all edges in~$E'$, add vertex~$v_i$ to~$H$, and for each~$\{x,y\}\in E'$, add the edges~$\{x,v_i\}$ and~$\{y,v_i\}$.
  Note that~$H$ remains connected after this step, and the set of leaves remains unchanged.
  Finally, compute a minimum feedback edge set in~$H$ if necessary.
  Observe that since~$V(H)=V(H_i)\cup\{v_i\}$, $H$ forms a terminal Steiner tree of size~$k+|T|$ in~$G$.

  Next, we describe the strict composition for \pprob{Terminal Steiner Tree}{k+|T|}. 
  Given the instances $(G_1,k),\ldots,(G_d,k)$, we create an instance $(G',k)$ as follows.
  Let~$G'$ be initially the disjoint union of~$G_1,\ldots,G_d$.
  For each $t\in T$, identify its copies in~$G_1,\ldots,G_d$, say~$t_1,\ldots,t_d$, with one vertex~$t'$ corresponding to~$t$.
  This finishes the construction of~$G'$.
  Note that for every $i,j\in[d]$, $i\neq j$, any path between a vertex in~$G_i$ and a vertex in $G_j$ contains a terminal vertex.
  Hence, any terminal Steiner tree in $G'$ contains non-terminal vertices only in~$G_i$ for exactly one~$i\in[d]$.
  It is not difficult to see that $(G',k)$ is a yes-instance if and only if one of the instances~$(G_1,k),\ldots,(G_d,k)$ is a yes-instance.
  \end{proof}
}

\paragraph{Multicolored Graph Problems.}
\label{ssec:multicol}
In many cases, a vertex-colored version of the graph problems can help to construct diminishers. As an example, it remains open whether the problem \pprob{Path}{k}, asking whether a given graph contains a simple path of length $k$, is diminishable. However for \pprob{Multicolored Path}{k} we can show that it is.
The \prob{Multicolored Path} problem is defined as follows: Given an undirected graph~$G=(V,E)$ with a vertex coloring function $\col: V \to \{1,\ldots,k\}$, determine whether there exists a simple path of length~$k$ with one vertex of each color. This problem is \NP{}-complete as it generalizes \prob{Hamiltonian Path}. 
Furthermore, \pprob{Multicolored Path}{k} is in FPT as it can be solved in~$2^{O(k)} n^2$~time~\cite{AlonYZ95}.%

\begin{proposition}
\label{lem:multicoloredpathk}
\pprob{Multicolored Path}{k} is diminishable.
\end{proposition}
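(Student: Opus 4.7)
The plan is to introduce an auxiliary problem, \emph{Rooted Multicolored Path} (RMCP)---given $(G, r, \col, k)$ with $r \in V$ and $\col : V \to \{1, \ldots, k\}$, decide whether $G$ contains a simple $k$-vertex path starting at $r$ and using each color exactly once---prove RMCP diminishable via \Cref{lem:branchcomp}, and then transfer the diminisher to MCP through two simple gadget conversions.

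For the parameter-decreasing branching rule on an RMCP instance $(G,r,\col,k)$, I will output, for each neighbor $u \in N_G(r)$ with $\col(u) \neq \col(r)$, the sub-instance $(G - V_{\col(r)},\, u,\, \col',\, k-1)$, where $V_{\col(r)} := \{v \in V : \col(v) = \col(r)\}$ and $\col'$ is the restriction of $\col$ to $V \setminus V_{\col(r)}$ with colors relabeled to $\{1,\ldots,k-1\}$. The equivalence mimics the Rooted Path branching of Chen~\emph{et~al.}~\cite{chen2011lower}: a multicolored $k$-path starting at $r$ uses $r$ as its unique color-$\col(r)$ vertex, so deleting $r$ leaves a multicolored $(k-1)$-path in $G - V_{\col(r)}$ starting at the second vertex $u$, and prepending $r$ to any such $(k-1)$-path reconstructs a $k$-path. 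For the strict composition, given $(G_1,r_1,\col_1,k),\ldots,(G_t,r_t,\col_t,k)$, I relabel each $\col_i$ to the common color set $\{1,\ldots,k\}$, take the disjoint union, and add a fresh root $r^*$ of a new color $k+1$ adjacent to each $r_i$. Because $r^*$'s only neighbors are the $r_i$ and the $G_i$ are otherwise disjoint, any $(k+1)$-multicolored path rooted at $r^*$ must enter exactly one component $G_i$ via $r_i$ and is equivalent to a $k$-multicolored path rooted at $r_i$ in $G_i$; the additive constant is $c=1$. \Cref{lem:branchcomp} then yields a parameter diminisher $\mathcal{D}_{\mathrm{RMCP}}$ for RMCP.

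To build the MCP diminisher, I combine $\mathcal{D}_{\mathrm{RMCP}}$ with two gadgets. First, from an MCP instance $(G,\col,k)$ I construct the RMCP instance $(G^+,r^*,\col^+,k+1)$ by attaching a new vertex $r^*$ of fresh color $k+1$ adjacent to every vertex of $G$; a $k$-multicolored path in $G$ corresponds, by prepending $r^*$, to a $(k+1)$-multicolored path rooted at $r^*$ in $G^+$, and conversely, since $r^*$ is the only color-$(k+1)$ vertex, any such rooted path continues into $V(G)$ and yields a $k$-multicolored path in $G$. Second, I apply $\mathcal{D}_{\mathrm{RMCP}}$ three times to obtain an equivalent RMCP instance $(G',r',\col',k')$ with $k' \leq (k+1)-3 = k-2$. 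Third, I construct $G''$ from $G'$ by attaching a pendant $r^{**}$ of fresh color $k'+1$ at $r'$, yielding the MCP instance $(G'',\col'',k'+1)$; in any $(k'+1)$-multicolored path of $(G'',\col'',k'+1)$ the unique color-$(k'+1)$ vertex $r^{**}$ has degree one and must be a path endpoint, so the path begins $r^{**}-r'-\ldots$, which is equivalent to an RMCP solution on $(G',r',\col',k')$. The final MCP instance has parameter $k'+1 \leq k-1 < k$, so the entire composition is a polynomial-time parameter diminisher for MCP.

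The principal anticipated care will be in verifying each equivalence and tracking color relabelings; both rest on the repeated observation that a vertex of unique color (the hub $r^*$ or the pendant $r^{**}$) is forced to appear in any multicolored solution, and in the pendant case is additionally forced by its degree to be an endpoint of the path.
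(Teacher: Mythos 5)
Your proof is correct, but it takes a genuinely different route from the paper's. The paper diminishes \pprob{Multicolored Path}{k} directly: its branching rule enumerates ordered triplets $(v_1,v_2,v_3)$ forming a multicolored path, deletes every other vertex whose color lies in $\{\col(v_1),\col(v_2),\col(v_3)\}$, makes $v_2$ a pendant attached only to $v_3$, and relabels the remaining colors to $\{1,\ldots,k-1\}$; the composition is plain disjoint union, so \Cref{lem:branchcomp} applies with additive constant~$0$. There the pendant $v_2$ plays the role of a root encoded \emph{inside} the MCP instance itself, so no auxiliary problem is needed. You instead make the root explicit through an intermediate rooted problem RMCP, give RMCP a branch-and-compose diminisher that closely mirrors the Rooted Path diminisher of Chen~\emph{et~al.}\ recalled in \Cref{sec:framework} (branch on the root's neighbors, compose by attaching a fresh super-root), and transfer it to MCP via two gadgets---a universal fresh-colored root vertex in one direction, a fresh-colored pendant in the other. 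Since both gadgets raise the parameter by one, you correctly refrain from invoking \Cref{lem:dimreduction} (which would require parameter-non-increasing reductions in both directions) and instead apply the RMCP diminisher three times to absorb the $+2$ overhead, landing at $k'+1\le k-1<k$. Both proofs are valid instantiations of the branch-and-compose framework; the paper's is more compact and self-contained, whereas yours is more modular and makes the analogy with the Rooted Path argument transparent. (Like the paper's proof, yours tacitly assumes $k$ exceeds a small constant; for tiny $k$ one outputs a trivial yes- or no-instance, which is standard.)
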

\begin{proof}
We present a parameter-decreasing branching rule and a strict composition for \pprob{Multicolored Path}{k}. The result will then follow directly from \Cref{lem:branchcomp}. 
Let $(G=(V,E),\col)$ be an instance of the \pprob{Multicolored Path}{k} problem. 
The parameter-decreasing branching rule for~$(G=(V,E),\col)$ creates a graph $G_{(v_1,v_2,v_3)}$ for each ordered triplet $(v_1,v_2,v_3)$ of vertices of~$V$ such that $v_1,v_2,v_3$ is a multicolored path in $G$. 
The graph~$G_{(v_1,v_2,v_3)}$ is constructed from $G$ as follows: We delete from $G$ all vertices $w\in V\setminus\{v_2,v_3\}$ with~$\col(w)\in\{\col(v_1),\col(v_2),\col(v_3)\}$. 
Following this, only vertices of $k-1$ colors remain, and $v_2$ and~$v_3$ are the only vertices colored $\col(v_2)$ and $\col(v_3)$, respectively. 
We then delete all edges incident with~$v_2$, apart from $\{v_2,v_3\}$, and relabel all colors so that the image of $\col$ for~$G_{(v_1,v_2,v_3)}$ is~$\{1,\ldots,k-1\}$.

Clearly our parameter decreasing branching rule can be performed in polynomial time. Furthermore, the parameter decreases in each output instance. 
We show that the first requirement of Definition~\ref{def:branching} holds as well: Indeed, suppose that $G$ has a multicolored path~$v_1,v_2,\ldots,v_k$ of length~$k$. 
Then $v_2,\ldots,v_k$ is a multicolored path of length~$k-1$ in~$G_{(v_1,v_2,v_3)}$ by construction. 
Conversely, suppose there is a multicolored path $u_2,\ldots,u_k$ of length $k-1$ in some $G_{(v_1,v_2,v_3)}$.  
Then since $v_2$ is the only vertex of color $\col(v_2)$ in $G_{(v_1,v_2,v_3)}$, and since $v_2$ is only adjacent to~$v_3$, it must be w.l.o.g.\ that $u_2=v_2$ and $u_3=v_3$. Therefore, since $v_1$ is adjacent to $v_2$ in $G$, and no vertices of $u_2,\ldots,u_k$ have color $\col(v_1)$ in $G$, the sequence of $v_1,u_2,\ldots,u_k$ forms a multicolored path of length $k$ in~$G$.

The strict composition for \pprob{Multicolored Path}{k} is as follows. 
Given a sequence of inputs~$(G_1,\col_1),\ldots,(G_t,\col_t)$, the strict composition constructs the disjoint union $G$ and the coloring function~$\col$ of all graphs $G_i$ and coloring functions $\col_i$, $1\leq i\leq t$. 
Clearly, $(G,\col)$ contains a multicolored path of length $k$ if and only if there is a multicolored path of length $k$ in some~$(G_i,\col_i)$. The result thus follows directly from \Cref{lem:branchcomp}.
\end{proof}

\begin{proposition}%
\label{prop:multicoloredpathklogn}
Unless \PeqNP{}, \pprob{Multicolored Path}{k \log n} has no strict polynomial kernel.
\end{proposition}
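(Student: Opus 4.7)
My strategy is to apply Theorem~\ref{thm:diminisher} to \pprob{Multicolored Path}{k\log n}. Two of the theorem's hypotheses are easy: the unparameterized problem is \NP{}-hard (it generalizes Hamiltonian Path), and any instance with $k\log n \le c$ has $k \le c$ for $n \ge 2$, so it can be solved in polynomial time by the standard $2^{O(k)} n^{O(1)}$-time \FPT{} algorithm for Multicolored Path. What remains is to show that \pprob{Multicolored Path}{k\log n} is diminishable, which is the heart of the argument.

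Reusing the diminisher of Proposition~\ref{lem:multicoloredpathk} directly fails: its branching rule produces $\Theta(n^3)$ subinstances on at most $n$ vertices each, and the disjoint-union composition then yields an equivalent instance on $\Theta(n^4)$ vertices with $k-1$ colors, giving a new parameter $(k-1)\log(\Theta(n^4)) \approx 4(k-1)\log n$ that exceeds $k\log n$ for every $k \ge 2$. A tailored construction is therefore needed.

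My plan is to retain the parameter-decreasing branching rule from Proposition~\ref{lem:multicoloredpathk} but replace the disjoint-union composition by a more economical one that enlarges the vertex count by only a constant factor~$c$. With such a composition the new parameter $(k-1)\log(cn)$ equals $(k-1)\log n + (k-1)\log c$, which is strictly below $k\log n$ whenever $n > c^{k-1}$. In the complementary regime $n \le c^{k-1}$ one has $k = \Omega(\log n)$ and the input size $O(n^2)$ is at most~$2^{O(k)}$, so the standard \FPT{} algorithm runs in time polynomial in the input size; in that case we simply solve the instance and output a trivial equivalent instance of parameter strictly less than~$K := k\log n$. Combining both regimes yields a polynomial-time parameter diminisher.

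The main technical obstacle is the design of the constant-blow-up OR-composition for the $\Theta(n^3)$ subinstances produced by the branching rule. A natural attempt is to overlay all subinstances on a common vertex set by identifying copies of the same original vertex of~$G$, attaching small $O(1)$-vertex gadgets that certify the ``active'' subinstance and prevent a multicolored $(k-1)$-path from mixing edges across different branches. Proving that this (or a similar) composition does not introduce spurious cross-branch solutions is the principal difficulty the proof has to overcome.
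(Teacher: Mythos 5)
The paper proves this by a reparameterization argument, not a diminisher: it invokes Chen, Flum, and M\"uller~\cite[Prop.~3.10]{chen2011lower}, which shows that for any parameterized problem solvable in $2^{k^{O(1)}}|x|^{O(1)}$ time the $k$-parameterization admits a (strict) polynomial kernel if and only if the $k\log|x|$-parameterization does, and then combines this with Proposition~\ref{lem:multicoloredpathk} and the $2^{O(k)}n^{O(1)}$-time algorithm of~\cite{AlonYZ95}. Your plan to build a diminisher directly for the $k\log n$-parameterization is a genuinely different strategy, but its case analysis is broken.

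Concretely, the two regimes are swapped. You assert that when $n \le c^{k-1}$ the $2^{O(k)}n^{O(1)}$-time algorithm runs in time polynomial in the input $O(n^2)$ because the input size is ``at most $2^{O(k)}$.'' That inference is invalid: $n \le c^{k-1}$ only gives $k = \Omega(\log n)$, a \emph{lower} bound on $k$, and hence no control over $2^{O(k)}$; indeed this regime includes $k = n$ (Hamiltonian Path with all vertices distinctly colored), on which the algorithm is exponential in the input. The regime in which the FPT algorithm is genuinely polynomial is the opposite one, $n > c^{k-1}$: there $k < 1 + \log_c n$, so $2^{O(k)} = n^{O(1)}$ --- but that is precisely the regime in which your branching-plus-composition already succeeds, so there it is redundant. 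In the remaining regime $n \le c^{k-1}$ you have no polynomial-time mechanism at all to output an equivalent instance with strictly smaller parameter $k\log n$, and this gap does not look patchable within the direct-diminisher strategy; the cited transfer result avoids it because its case split is made on the output of a hypothetical kernelization (whose size is bounded in terms of $k$), not inside a diminisher that must handle arbitrary instances. Separately, the constant-blow-up OR-composition for \prob{Multicolored Path} that your plan postulates is left unconstructed, and building one looks to be a substantial difficulty in its own right.
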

{
  \begin{proof}
  Chen, Flum, and M\"{u}ller~\cite[Prop.~3.10]{chen2011lower} proved that if $L$ is a parameterized problem which can be solved in $2^{k^{O(1)}}|x|^{O(1)}$ time, where $k$ is the parameter and $|x|$ is the instance size, then $L(k)$ has a polynomial kernel if and only if $L(k \log |x|)$ has a polynomial kernel. 
  It is easy to verify that their proof also holds for strict polynomial kernels. 
  Thus, as \pprob{Multicolored Path}{k} can be solved in $2^{k^{O(1)}}n^{O(1)}$~\cite{AlonYZ95}, the result follows.
  \end{proof}
}
The idea used in the parameter diminisher for \pprob{Multicolored Path}{k} can also be applied for other problems. The following problem, \prob{Colorful Graph Motif}, asks for a given undirected graph $G=(V,E)$ and a given vertex coloring function $\col:V \to \{1,\ldots,k\}$, whether there exists a connected subgraph of~$G$ containing exactly one vertex of each color. 
\pprob{Colorful Graph Motif}{k} is known to be in \FPT{}~\cite{betzler2008parameterized}.
\begin{proposition}%
\label{lem:coolormotifk}
\pprob{Colorful Graph Motif}{k} is diminishable.
\end{proposition}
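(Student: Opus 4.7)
The plan is to invoke \Cref{lem:branchcomp} by exhibiting a parameter-decreasing branching rule and a strict composition for \pprob{Colorful Graph Motif}{k}, in the same spirit as the one just given for \pprob{Multicolored Path}{k}. Assume $k\geq 2$, as $k\leq 1$ is trivial. For the branching rule, fix the color~$k$ and, for every edge $\{u,v\}\in E(G)$ with $\col(u)=k$ and $\col(v)=c\neq k$, output the instance $G_{(u,v)}$ obtained by (i)~deleting every other vertex of color~$k$, (ii)~deleting every other vertex of color~$c$, and (iii)~contracting the edge $\{u,v\}$ into a single vertex~$v^\star$ of color~$c$ whose neighborhood is $(N_G(u)\cup N_G(v))\setminus\{u,v\}$ restricted to the surviving vertices. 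The resulting instance uses only the colors $\{1,\ldots,k-1\}$, and the construction produces $O(|E|)$ branches, each in polynomial time.

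The strict composition is the same disjoint-union construction used for \pprob{Multicolored Path}{k}: given inputs $(G_1,\col_1),\ldots,(G_t,\col_t)$ over a common color set $\{1,\ldots,k\}$, output their disjoint union $(G',\col')$. Since any colorful motif is connected and hence lies entirely inside one component, $(G',\col')$ is a yes-instance if and only if some $(G_i,\col_i)$ is, with the parameter preserved (so $c=0$ in \Cref{def:compostion}). Combining these two ingredients via \Cref{lem:branchcomp} then yields the desired diminisher.

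The main obstacle is the correctness of the branching rule. The forward direction is routine: given any colorful motif~$M$ in~$G$, let $u^\star\in M$ be its unique vertex of color~$k$; as $M$ is connected and $|M|\geq 2$, $u^\star$ has some neighbor $v^\star\in M$, whose color $c$ is likewise unique in~$M$. Thus the deletions in (i) and (ii) leave~$M$ intact, and step (iii) contracts~$M$ into a colorful motif of $k-1$ colors in $G_{(u^\star,v^\star)}$. The backward direction hinges on the key observation enabled by step~(ii): $v^\star$~is the unique vertex of color~$c$ in $G_{(u,v)}$, so every colorful motif $M'$ of $G_{(u,v)}$ is forced to contain~$v^\star$. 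To lift $M'$ back to~$G$, replace $v^\star$ by the edge $\{u,v\}$ and reroute each edge $\{v^\star,x\}$ used by $M'$ to one of $\{u,x\}$ or $\{v,x\}$ in~$G$ (at least one of which exists by how $v^\star$'s neighborhood was defined). The outcome is a connected colorful motif of all $k$ colors in~$G$, completing the equivalence and hence the plan.
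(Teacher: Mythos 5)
Your proof is correct and follows essentially the same approach as the paper's: a branch-per-edge rule that collapses the two endpoints' color classes to a single new vertex (you do it by deletion plus contraction; the paper deletes both color classes entirely and adds a fresh vertex $v^\star$ with neighborhood $N(v)\cup N(w)$, which yields the same graph), combined with the disjoint-union strict composition and \Cref{lem:branchcomp}. The only genuine deviation is that you branch only over edges incident to the fixed color $k$ rather than over all edges; this is a harmless optimization and both variants are sound.
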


{
  \begin{proof}
  We show that there is a parameter-decreasing branching rule and a strict composition for \pprob{Colorful Graph Motif}{k}. Let $(G=(V,E),\col)$ be an instance of \pprob{Colorful Graph Motif}{k}. Assume there are only edges between differently colored vertices. For each~$\{v, w\} \in E$, the parameter-decreasing branching rule creates a graph $G_{\{v, w\}}$ which is a copy of~$G$ where all vertices of colors $\col(v)$ and $\col(w)$ are removed. Furthermore, a new vertex $v^\star$ is added with color $\col(v^\star) = \col(v)$ and with edges to all vertices in~$N(v)\cup N(w)$ that have not been removed. Clearly,~$G_{\{v, w\}}$ only contains vertices of $k-1$ different colors and is computable in polynomial time. Also, if $G$ contains a colorful motif~$\{v_1, v_2, \ldots, v_k\}$ where, without loss of generality, $\{v_1, v_2\} \in E$, then $G_{\{v_1, v_2\}}$ contains the colorful motif~$\{v^\star, v_3, \ldots, v_k\}$. Conversely, if a graph $G_{\{v, w\}}$ contains a colorful motif, then it has to contain~$v^\star$ since it is the only vertex of its color. Let $\{v^\star, v_2, \ldots, v_k\}$ be a colorful motif in $G_{\{v, w\}}$, then $\{v, w, v_2, \ldots, v_k\}$ is a colorful motif in $G$ since $v^\star$ is connected to some vertex $v_i$ in the motif and hence, by construction, $v_i$ is connected to $v$ or to $w$ and there is an edge between $v$ and $w$.

  The strict composition constructs the disjoint union of the sequence of inputs. Clearly, the disjoint union has a colorful motif if and only if one of the input graphs has a colorful motif. 
  \Cref{lem:branchcomp} now yields the result.
  \end{proof}
}

\paragraph{Non-Deterministic Turing Machine Computations.}
\label{ssec:ntms}
Now we turn our attention to single-tape, single-head, non-deterministic Turing machine computations, which also played a significant role in the development of parameterized complexity theory. 
A Turing machine is defined as a tuple $M=(\Sigma, Q, q_0, F, \delta)$, where $\Sigma$ is the alphabet, $Q$ is the set of states, $q_0\in Q$ is the initial state, $F\subseteq Q$ are the accepting states, and~$\delta \subseteq ((\Sigma\cup\{\square\}) \times Q)\times (\Sigma\times Q\times \{-1, 0, 1\})$ is the transition relation, where $\square$ denotes the blank symbol in an empty cell.
The \prob{Short NTM Computation} problem asks, given a Turing machine~$M$, a word $x\in\Sigma^*$ which is initially written on the tape, and a number~$k$ in unary encoding, whether there is a run such that, after $k$ computation steps, the Turing machine $M$ is in an accepting state.
This problem is known to be \W{1}-hard when parameterized by $k$ and in \FPT{} when parameterized by $(k+|\Sigma|)$ and when parameterized by $(k+|Q|)$~\cite{DowneyF99}. In the following we show that the problem does not admit a strict polynomial kernel for those parameterizations unless~\PeqNP{}.

\begin{proposition}%
\label{prop:shortntmc}
\pprob{Short NTM Computation}{k+|\Sigma|} is diminishable.
\end{proposition}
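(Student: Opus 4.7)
The plan is to apply the branch-and-compose framework of \Cref{lem:branchcomp}: I will exhibit a parameter-decreasing branching rule and a strict composition for \pprob{Short NTM Computation}{k+|\Sigma|}. The overall strategy is to reduce $k$ by one via branching while keeping $|\Sigma|$ unchanged, and to compose with only a constant additive blowup in both $k$ and $|\Sigma|$, yielding a net decrease of one in the parameter.

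For the branching rule, given $(M,x,k)$ with $M=(\Sigma,Q,q_0,F,\delta)$, I enumerate the transitions $\tau = ((x_0,q_0)\to(\sigma',q',d))\in\delta$ applicable to the initial configuration. For each such $\tau$, I build an instance $(M_\tau, x_\tau, k-1)$ whose initial configuration equals that of $M$ after executing $\tau$ once. The alphabet of $M_\tau$ remains $\Sigma$; only the state set is augmented, by constantly many auxiliary states, which is free since $|Q|$ is not part of the parameter. When $d=0$ the construction is immediate: $M_\tau = (\Sigma,Q,q',F,\delta)$ and $x_\tau = \sigma' x_1\cdots x_n$. When $d=\pm 1$, the auxiliary states handle the head shift by ``carrying'' the symbol written at the original head position and re-emitting it should the head first revisit that boundary, so no new alphabet symbol is ever introduced. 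Correctness is immediate because $M$ accepts $x$ within $k$ steps if and only if some branch $M_\tau$ accepts $x_\tau$ within $k-1$ steps.

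For the strict composition, given instances $(M_j, x_j, k')_{j\in[t]}$ over alphabet $\Sigma$, I build a single TM $M^\star$ over $\Sigma\cup\{\#\}$ on a tape $x^\star$ that encodes the $x_j$ separated by the fresh delimiter $\#$. The initial state of $M^\star$ has $t$ nondeterministic transitions, one per branch, each transferring control in constantly many steps to an entry state simulating $M_j$. Because all branched TMs originate from the same $M$, they share their transition table beyond the initial step, so the branch selector can be absorbed into the state set (unbounded in the parameter), while the alphabet grows by exactly one symbol and $k'$ by at most some constant $c$. Applying the branching rule $c+1$ times and then composing once yields, by \Cref{lem:branchcomp}, the desired parameter diminisher. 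The main obstacle will be the strict composition: implementing branch selection and the subsequent handover of tape control in only constantly many steps, without either navigating across the full tape $x^\star$ (potentially linearly many steps) or blowing up the alphabet to encode several branches per cell (which would violate the parameter bound), requires a careful state-only encoding of the selected branch together with the shared transition structure of the~$M_j$. Handling the head position correctly in both phases, without ever introducing a new alphabet symbol for the tape boundary, is the central technical challenge.
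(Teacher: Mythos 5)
You take a branch-and-compose route (\Cref{lem:branchcomp}), branching on the first transition of $M$ and then composing; the paper instead gives a direct diminisher that folds the \emph{last} two steps of $M$ into a single step, non-deterministically guessing (and thereafter tracking in the state) the cell and symbol that $M$'s head will read at step~$k$, all over the unchanged alphabet~$\Sigma$. The strategies are genuinely different, and the paper's choice is not accidental: folding the last two steps needs only one machine and one tape, while your plan must eventually merge $t$ machines with $t$ different tapes into one.

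That merge is where your proposal has a real gap, which you partially flag yourself. Laying out $x_1 \# x_2 \# \cdots \# x_t$ on a single tape and selecting branch~$j$ via a nondeterministic initial transition does not work: the head starts at cell~$0$, so reaching the $j$-th block costs $\Theta\bigl(\sum_{i<j}|x_i|\bigr)$ steps, not $O(1)$, and you cannot encode block positions in the alphabet without breaking the $|\Sigma|$ budget. Observing that the $M_j$ share $M$'s transition table addresses the control side of the merge but not the tape side, and your last sentence essentially concedes the construction is unfinished. A salvage is conceivable (all $x_\tau$ agree with $x$ away from cell~$0$, so a composed machine over $\Sigma$ could keep $x$ on its tape and nondeterministically carry in its state ``cell~$0$ really holds $\sigma'_\tau$ and the head is really at offset~$d_\tau$''), but making this precise requires tracking the head position relative to cell~$0$ and updating the carried boundary symbol on every recrossing --- exactly the bookkeeping your branching-rule sketch also elides: a head move at time~$1$ creates a permanent one-cell misalignment between physical and virtual tape, and ``carrying and re-emitting'' a single symbol does not handle repeated boundary crossings or further leftward drift. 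As written, neither half of your plan is actually established; the paper's backward-folding construction sidesteps both difficulties.
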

{
  \begin{proof}
  The main idea behind the parameter diminisher for \pprob{Short NTM Computation}{k+|\Sigma|} is to transform the given Turing machine $M$ to a Turing machine $M'=(\Sigma,Q',q'_0,F',\delta')$ that computes the last two steps of $M$, that is step $k-1$ and step $k$, in one step. 
  (If $k=1$, then the parameter diminisher can produce a trivial \YES{}- or \NO-instance.) 
  To do that, we need to encode the letter and its position on the tape that will be read by Turing machine~$M$ in step $k$ in the states of Turing machine $M'$. 
  Additionally, we need to use the states to count the steps in order to allow the Turing machine $M'$ to recognize when it has to compute two steps of $M$ in one step, and we need to encode the position of the tape which the head is currently looking at. Let $x\in \Sigma^*$ denote the input string, that is initially written on the tape, starting at cell $0$.

  Note that in $k$ steps, the head of the Turing machine can potentially only move to and read from cells $-k, -k+1, \ldots, 0, \ldots, k-1, k$, assuming the initial head position is $0$. Hence, we set 
  $$Q' = \{q_0'\}\cup (Q\times (\Sigma\cup \{\square\})\times\{-k,\dots,k\}\times\{-k,\dots,k\}\times\{1,\dots,k-1\})\cup \{\bar{q}\}.$$ 
  So every state different from $q_0'$ and $\bar{q}$ is a tuple consisting of five elements: the state of Turing machine $M$ it corresponds to, the letter that is on a specific position of the tape, that position, the current position of the head, and the current computation step. Every of those states is accepting, it the corresponding state of $M$ is accepting, $q_0'$ is accepting if $q_0$ is accepting, and $\bar{q}$ is not an accepting state.

  For every transition in $\delta$ from the initial state $q_0$ to some state $q$ we create transitions in~$\delta'$ from the new initial state $q_0'$ to all states $(q,x_i, i, m, 1)$ such that the following conditions are met.
  \begin{compactitem}
    \item $-k\leq i\leq k$,
    \item for $1\le i \le |x|$, $x_i$ is the $i$-th letter of input $x$, for $i=0$, $x_0$ is the letter written by $\delta$ to cell~$0$, and $\square$ otherwise, and
    \item $m$ is the movement of the head in the transition $\delta$, that is, $m\in\{-1, 0, 1\}$.
  \end{compactitem}
  The following stays unchanged in the transition: the letter that needs to be read by the head for the transition to be possible, the letter written to the tape, and the movement of the head.
  For every transition from a state $q$ to a state $q'$ in $\delta$, we create transitions from states~$(q, x, i, j, t)$ to states $(q', x', i, j', t+1)$ in $\delta'$, such that the following conditions are met.
  \begin{compactitem}
    \item $-k\leq i\leq k$,
    \item $x'=x$ unless we have that $i=j$, in this case the Turing machine writes symbol $y$ into cell $i$ in that transition $\delta$, and hence $x' = y$,
    \item $j' = j+m$, where $m$ is the movement of the head in the transition $\delta$, that is, $m\in\{-1, 0, 1\}$, and
    \item $1\le t \le k-3$.
  \end{compactitem}
  Again, the following stays unchanged in the transition: the letter that needs to be read by the head for the transition to be possible, the letter written to the tape, and the movement of the head. Finally, for every state $(q, x, i, j, k-2)$ we create a transition in $\delta'$ that on reading symbol $y$ goes to state $(q', x', i, j', k-1)$ without moving the head if the following conditions are met.
  \begin{compactitem}
    \item There is a transition from $q$ on reading symbol $y$ to a state $q^\star$ in $\delta$ with head movement~$m$ such that $j+m=i$, and
    \item there is a transition in $\delta$ that, if $M$ is in state $q^\star$ and reads $x$, goes to state $q'$. 
  \end{compactitem}
  Otherwise, we create a transition in $\delta'$ that goes from state $(q, x, i, j, k-2)$ and reading symbol $y$ to state $\bar{q}$ without moving the head. No writing is necessary.

  Turing machine $M'$ needs to non-deterministically guess the position of the head of Turing machine $M$ in step $k-1$ and then simulates the behavior of $M$ until step $k-2$. If the guess was wrong, then $M$ will transition to $\bar{q}$, a non-accepting state. If the guess was correct, then $M'$ accepts in $k-1$ steps if and only if $M$ accepts in $k$ steps. 
  \end{proof}
}
In \prob{Short Binary NTM Computation} the input Turing machines are restricted to have a two-element alphabet. 
\pprob{Short Binary NTM Computation}{k} is in \FPT{}~\cite{DowneyF99} and it is not hard to see that the parameter diminisher for \pprob{Short NTM Computation}{k+|\Sigma|} is also a parameter diminisher for \pprob{Short Binary NTM Computation}{k}. 
\iflong{}
This yields the following corollary.
  \begin{corollary}
  \pprob{Short Binary NTM Computation}{k} is diminishable.
  \end{corollary}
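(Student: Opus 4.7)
The plan is to invoke directly the construction from the preceding proposition and argue that, since the alphabet is left untouched by that construction, it survives the restriction to the binary case. Concretely, given an instance $(M, x, k)$ of \pprob{Short Binary NTM Computation}{k} with $M=(\Sigma, Q, q_0, F, \delta)$ and $|\Sigma|=2$, I would apply the diminisher from the proof of Proposition~\ref{prop:shortntmc} to obtain a new machine $M'=(\Sigma, Q', q_0', F', \delta')$ together with the same input word $x$ and new step bound $k-1$.

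The first thing to verify is that $M'$ is indeed a binary NTM, i.e.\ that the alphabet of $M'$ is identical to that of $M$. Inspecting the construction in Proposition~\ref{prop:shortntmc}, only the state set is modified: the new states are tuples over $Q\times(\Sigma\cup\{\square\})\times\{-k,\ldots,k\}^2\times\{1,\ldots,k-1\}$, plus $q_0'$ and $\bar q$, while the tape alphabet of $M'$ remains $\Sigma$ (the read/write symbols in each transition are the symbols read and written by $M$, together with the corresponding $\square$ on blank cells). In particular, if $|\Sigma|=2$, then $|\Sigma|$ stays $2$ in the output instance, so $(M', x, k-1)$ is a legitimate instance of \prob{Short Binary NTM Computation}.

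Correctness, i.e.\ that $(M, x, k)$ accepts if and only if $(M', x, k-1)$ accepts, is inherited verbatim from the proof of Proposition~\ref{prop:shortntmc}: nowhere does that argument use that $|\Sigma|$ is large, only that the alphabet is available to record letters in states. The parameter drops from $k$ to $k-1$ by design, and the entire construction runs in polynomial time (the state set has size $\poly(|M|+k)$ and the transition relation can be written out in $\poly(|M|+k)$ time). Hence the algorithm meets all three requirements of Definition~\ref{def:diminisher}.

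There is essentially no obstacle here; the statement is really an observation that the earlier diminisher was ``alphabet-preserving.'' The only subtle point worth flagging explicitly is the degenerate case $k=1$, which must be handled separately exactly as in Proposition~\ref{prop:shortntmc} (output a trivial \YES{}- or \NO-instance according to whether $M$ can accept $x$ in a single step), since the diminisher compresses the last two steps into one and therefore needs $k\geq 2$ to produce a meaningful smaller-parameter instance.
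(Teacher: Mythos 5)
Your argument is correct and matches the paper's: the paper simply observes that the diminisher constructed for \pprob{Short NTM Computation}{k+|\Sigma|} leaves the tape alphabet~$\Sigma$ unchanged (only the state set is enlarged), so it directly serves as a diminisher for \pprob{Short Binary NTM Computation}{k}. You have spelled out the same observation, including the $k=1$ base case, so there is nothing to add.
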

\fi{}
\begin{proposition}
\pprob{Short NTM Computation}{k+|Q|} is diminishable.
\end{proposition}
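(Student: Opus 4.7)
The plan is to adapt the construction of Proposition~\ref{prop:shortntmc} (the diminisher for \pprob{Short NTM Computation}{k+|\Sigma|}) by shifting its combinatorial blow-up from the state set to the alphabet. There, a new machine $M'$ runs $k-1$ steps and simulates $M$'s $k$ steps by collapsing $M$'s last two transitions into a single transition of $M'$; the non-deterministic guess about $M$'s configuration at step $k-1$, together with the head position and the step counter, is carried in a cartesian-product state $(q,x,i,j,t)$. The parameter $k+|\Sigma|$ is unaffected by this state blow-up because $|\Sigma|$ is untouched. For the dual parameterization $k+|Q|$ the natural move is the mirror image: absorb the analogous blow-up into $\Sigma'$ and keep $|Q'|$ essentially at $|Q|$.

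Concretely, I would construct $M'$ with an enriched alphabet $\Sigma':=\Sigma\cup(\Sigma\times Q\times\{-k,\ldots,k\}^2\times\{1,\ldots,k-1\})$ whose extra component plays exactly the role that the tuple $(q,x,i,j,t)$ played in the state in Proposition~\ref{prop:shortntmc}. The state set is taken to be $Q':=(Q\setminus\{q_0\})\cup\{q_0^\star\}$, where the new initial state $q_0^\star$ replaces $q_0$ after a short preprocessing pass that, if necessary, reroutes the incoming transitions of $q_0$ to a surrogate so that $q_0$ can be safely removed; with careful bookkeeping one keeps $|Q'|\le|Q|$. The input tape $x'$ is $x$ together with a single reserved ``guess cell''; in its very first step $M'$ non-deterministically overwrites that cell with an enriched symbol representing the guess, and the subsequent transitions of $\delta'$ mimic those of Proposition~\ref{prop:shortntmc}, except that the bookkeeping information is read from and written to the enriched tape symbols rather than being threaded through the state. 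Since $k'=k-1$ and $|Q'|\le|Q|$, the new parameter satisfies $k'+|Q'|\le k+|Q|-1<k+|Q|$, as required. Correctness follows by exactly the same case analysis as in Proposition~\ref{prop:shortntmc}, one-to-one with the guess now residing on the tape.

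The main obstacle will be the book-keeping: I have to design $\delta'$ so that the guess cell remains accessible to $M'$ at precisely the moment of the combined final step, and so that the step counter advances in lockstep with the head without costing any extra $M'$-steps. My plan is to piggy-back both a copy of the guess and the current counter value onto every enriched symbol that $M'$ writes, so that the cell under $M'$'s head always carries all the relevant bookkeeping; the price for this is paid only in $|\Sigma'|$, which is unconstrained by the parameter. Getting the preprocessing of $q_0$ to break even in $|Q'|$ (rather than incurring a net $+1$) requires a small case analysis depending on whether $q_0$ already has no incoming transitions, but is otherwise routine.
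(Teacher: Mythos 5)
Your plan has a genuine gap, and it is a consequence of trying to literally mirror the $k+|\Sigma|$ construction. That construction merges the \emph{last} two steps of $M$, and to know when the final merged transition must fire it threads a step counter (and the current head position, and the guess) through the \emph{state}. States are global: whatever the machine last wrote into its state is visible at the next step regardless of where the head is. Tape cells are not global. If you move the counter, the head-position field, and the guess into $\Sigma'$, then at the moment $M'$'s head moves onto a \emph{fresh} cell — a blank cell, or an input cell it has never written — that cell does not carry any of the bookkeeping, and with $|Q'|\le|Q|$ there is no room in the state to carry it either. Your ``piggy-back on every symbol the machine writes'' idea only propagates the counter to cells the head has already visited; it does nothing for the cell the head moves \emph{to}, which is exactly where the information is needed. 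You also cannot pre-write a step counter onto cells $\{-k,\dots,k\}$ at the start, because the counter changes every step. So the machine simply loses track of when step $k-2$ has been reached, and the merged final transition can never be triggered reliably.

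The paper sidesteps this entirely by a different and simpler idea: it merges the \emph{first} two steps rather than the last two. It keeps $Q$, $q_0$, and $F$ unchanged and enlarges only $\Sigma$, placing a single special symbol $(\square,x_0,x_1,S)$ under the head at the start. The tag $S$ marks ``this is the first step''; the merged transition is triggered purely by reading that tag, so no counter is ever needed, and after the first step the $S$ tag is overwritten so the merged transition cannot fire again. The only residual bookkeeping is the relative position $\{-1,0,+1\}$ of the simulated head inside the merged cell, and that is legitimately a property of the merged cell itself, so storing it in the tape symbol works: whenever the head is on the merged cell, the needed information is right there, and when it leaves and re-enters, the stored offset is automatically consistent with the side of entry. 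If you want to salvage your write-up, drop the last-two-steps structure inherited from Proposition~\ref{prop:shortntmc} and switch to first-two-steps merging; then the counter, the head-position-within-$\{-k,\dots,k\}$ field, and the nondeterministic guess all become unnecessary, and the approach goes through with the state set untouched (so the small ``break even on $q_0$'' case analysis is also unnecessary).
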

\begin{proof}
The idea behind the parameter diminisher for \pprob{Short NTM Computation}{k+|Q|} is to merge the symbols on the tape at positions $-1,0,+1$ into a single new symbol. In this way we can produce a Turing machine $M'$ that computes the first two steps of the given Turing machine~$M$ with a single step. (If $k=1$, then the parameter diminisher can produce a trivial \YES{}- or \NO-instance.)

More specifically, given a Turing machine $M=(\Sigma, Q, q_0, F, \delta)$, a word $x\in\Sigma^*$, and a positive integer $k$, we construct a Turing machine $M'=(\Sigma',Q,q_0,F,\delta')$ and a word $x'\in\Sigma'^*$ such that~$(M',x',k-1)$ is a \YES{}-instance if and only if $(M,x,k)$ is a \YES{}-instance.
We set~$\Sigma'=\Sigma\cup\left((\Sigma\cup\{\square\})\times\Sigma^2\times\{-1,0,+1,S\}\right)$.
The set $\{-1,0,+1,S\}$ encodes the position of the head of $M$ when reading a symbol at positions $-1,0,+1$, while $S$ encodes that this is the first step, and hence that we need to compute two steps of $M$ at once.
The new input $x'$ is set to~$x'=(\square,x_0,x_1,S) x_2 \dots x_n$, where $x_0\dots x_n=x$.

The transition relation $\delta'$ is defined as follows. On symbols in $\Sigma$ the Turing machine~$M'$ behaves like $M$, while on symbols $(\sigma_{-1},\sigma_{0},\sigma_{+1},i)$ with $i\in\{-1, 0, 1\}$, we simulate the original Turing machine $M$ on the character $\sigma_i$. 
If the Turing machine~$M$ moves left or right within~$\{-1,0,1\}$, then we keep the head on the same position and update the value of~$i$. If we move outside the range $\{-1,0,+1\}$, then we move the head to the left and right accordingly. Note that when the Turing machine~$M'$ returns to this symbol, then the value of $i$ will be automatically correct.

Finally, on symbols $(\sigma_{-1},\sigma_{0},\sigma_{+1},S)$ we are in state $q_0$ and simulate the first two steps of the Turing machine $M$. Also, we replace $S$ according to the movement of the head. This guarantees that $M'$ will only read such a symbol in the very first step and hence computes two steps of $M$ in one step exactly once. It is not difficult to see that $M'$ accepts $x'$ in $k-1$ steps if and only if $M$ accepts $x$ in $k$ steps.
\end{proof}

\section{Problems without Semi-Strict Polynomial Kernels}
\label{sec:nodiminishers}
\looseness=-1
\appendixsection{sec:nodiminishers}

Considering \Cref{def:strictPK}, strict kernels only allow an additive increase by a constant of the parameter value.
One may ask whether one can exclude less restrictive versions of strict kernels for parameterized problems using the concept of parameter diminishers.
Targeting this question, in this section we study scenarios with a multiplicative (instead of additive) parameter increase by a constant. We refer to this as \emph{semi-strict kernels}.

\begin{defi}[Semi-Strict Kernel]
\label{def:lessstrictPK}
A \emph{semi-strict kernelization} for a parameterized problem~$L$~is a polynomial-time algorithm that on input instance $(x,k)\in \{0,1\}^* \times \mathbb{N}$ outputs an instance~$(x',k')\in \{0,1\}^* \times \mathbb{N}$, the \emph{semi-strict kernel}, satisfying:
\begin{inparaenum}[(i)]
\item $(x, k) \in L \iff (x', k') \in L$,
\item $|x'| \le f(k)$, for some function $f$, and
\item $k' \le c\cdot k $, for some constant $c$.
\end{inparaenum}
We say that~$L$~admits a semi-strict \emph{polynomial} kernelization if $f(k)\in k^{O(1)}$.
\end{defi}
\iflong{}
  On the one hand, every strict kernelization with constant~$c$ is a semi-strict kernelizations with constant~$c+1$.
  On the other hand, if a parameterized problem~$L$ admits a semi-strict kernel with constant $c$, there is not necessarily a constant $c'$ such that for \emph{every} input instance $(x,k)$, the obtained parameter value~$k'$ of the output instance~$(x',k')$ is upper-bounded by $k'+c'$.
  Hence,~$L$ does not necessarily admit a strict kernelization.
  In this sense, \Cref{def:lessstrictPK} generalizes strict kernelizations.
\fi{}
Note that \Cref{thm:diminisher} in \Cref{sec:framework} does not imply that the problems mentioned in~\Cref{thm:main} do not admit semi-strict polynomial kernelizations unless \PeqNP{}.
Intuitively, in \Cref{thm:diminisher}, the parameter diminisher is constantly often applied to decrease the parameter, while dealing only with a constant additive blow-up of the parameter caused by the strict kernelization.
When dealing with a constant multiplicative blow-up of the parameter caused by the semi-strict kernelization, the parameter diminisher is required to be applied a non-constant number of times.
Hence, to deal with semi-strict kernelization, we introduce a stronger version of our parameter diminisher.

\begin{defi}[Strong Parameter Diminisher]
\label{def:strongdiminisher}
A \emph{strong parameter diminisher} for a parameterized problem $L$ is a polynomial-time algorithm that maps instances $(x,k) \in \{0,1\}^* \times \mathbb{N}$ to instances~$(x', k') \in \{0,1\}^* \times \mathbb{N}$ such that  $(x, k) \in L \text{ if and only if } (x', k') \in L$, and $k' \le k/c$, for some constant $c>1$.
\end{defi}

\begin{remark}
 To simplify arguments in the proofs, we assume without loss of generality that the constant of any strong diminisher is at least two for the remainder of this section.
 Consider a strong parameter diminisher $\calD$ with constant $1<c<2$. 
 Let $\calD'$ be the repetition of $\calD$ exactly $\lceil\log_c 2\rceil$~times.
 Then~$\calD'$ is a strong parameter diminisher with constant $c':=c^{\lceil\log_c 2\rceil}\geq 2$.
\end{remark}
Next, we prove an analogue of \Cref{thm:diminisher} for semi-strict polynomial kernelizations and strong parameter diminishers.

\begin{thm}
\label{thm:strongdiminisher}%
Let $L$ be a parameterized problem such that its unparameterized version is \NP{}-hard and we have that $\{(x,k)\in L\mid k\le c\}\in \PTIME{}$, for some constant $c\geq 1$. %
If $L$ is strongly diminishable and admits a semi-strict polynomial kernel, then \PeqNP{}.
\end{thm}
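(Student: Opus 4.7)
My plan is to follow the template of \Cref{thm:diminisher}, replacing the additive contraction used there with a multiplicative one. Let $\calD$ denote the strong parameter diminisher, with constant $c_d \ge 2$, and let $\mathcal{K}$ denote the semi-strict polynomial kernelization, with multiplicative constant $c_a \ge 1$ and polynomial size bound $g(k)\in k^{O(1)}$. I would define a single \emph{round} to consist of applying $\calD$ exactly $t := \lceil \log_{c_d}(2 c_a) \rceil$ times and then applying $\mathcal{K}$ once. Since $t$ is a fixed constant, each round runs in time polynomial in its input size.

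The first step is to check that a single round strictly contracts the parameter. Starting from $(x,k)$, $t$ applications of $\calD$ produce an equivalent instance with parameter at most $k/c_d^t$; one further application of $\mathcal{K}$ yields an equivalent instance with parameter at most $c_a \cdot k/c_d^t \le k/2$ by the choice of $t$. The second step is the size invariant: after the first round the instance has size at most $g(k)\in k^{O(1)}$, and each subsequent round starts from an input of size $k^{O(1)}$, so the $t$ further applications of $\calD$ only polynomially inflate the size before $\mathcal{K}$ compresses it back to $g(k/2^{r}) \le g(k)$ in round~$r$. Consequently every round runs in time $k^{O(1)}$.

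Iterating $r := \lceil \log_2(k/c) \rceil = O(\log k)$ rounds then yields an equivalent instance whose parameter is at most~$c$, which by hypothesis can be decided in polynomial time; the total running time is $O(\log k)\cdot k^{O(1)} = k^{O(1)}$, hence polynomial in $|x|+k$. Since the unparameterized version of $L$ is \NP{}-hard, this forces \PeqNP{}. The main obstacle is purely bookkeeping for the size invariant: one must confirm that the constant-many successive applications of $\calD$ within a round only inflate the size by a polynomial whose exponent does \emph{not} depend on the round number, and that this exponent does not compound over the $O(\log k)$ rounds. Fixing $t$ before starting the iteration makes this explicit, since the per-round blow-up exponent is then a fixed constant depending only on $t$ and on the polynomial bounds from $\calD$ and $\mathcal{K}$; any remaining details are a routine adaptation of the additive argument used for \Cref{thm:diminisher}.
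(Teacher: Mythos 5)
Your proposal follows essentially the same approach as the paper's own proof: apply the strong diminisher a fixed constant number of times so that the subsequent semi-strict kernelization still yields a net multiplicative contraction of the parameter, then iterate, using the kernel's polynomial size bound to keep intermediate instance sizes under control. The only cosmetic differences are in the choice of constants (you take $t = \lceil \log_{c_d}(2c_a)\rceil$ to get a per-round contraction factor of $1/2$, the paper takes $c_r = \lceil \log_{c_d}(c_a + c_d)\rceil$ and contraction factor $c_a/(c_a+c_d) < 1$) and that you state the $O(\log k)$ bound on the number of rounds explicitly, whereas the paper uses the cruder but equally sufficient bound of "at most $k$" repetitions.
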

\begin{proof}
 Let $L$ be a parameterized problem whose unparameterized version is \NP{}-hard and it holds that $\{(x,k)\in L\mid k\le c\}\in \PTIME{}$, for a constant $c\geq 1$. 
 Let $\calD$ be a strong parameter diminisher for $L$ with constant $c_d\geq 2$ and let $\calA$ be a semi-strict polynomial kernelization for~$L$ with constant~$c_a>1$.
 \iflong{}We show that we can solve any instance $(x,k)$ of $L$, with $k$ being the parameter, in polynomial time. \fi{}%
 Let $(x,k)$ be an instance of $L$.
 Apply $\calD$ on $(x,k)$ exactly $c_r:=\lceil\log_{c_d}(c_a+c_d)\rceil$~times to obtain an equivalent instance $\calD^{c_r}(x,k)=(x',k')$ with $k'\leq k/c_d^{c_r}\leq k/(c_a+c_d)$.
 Observe that the size of~$|(x', k')|$ is still polynomial in $|(x, k)|$ as $c_r$ is a constant.
 Next, apply $\calA$ on $(x',k')$ to obtain an equivalent instance $(x'',k'')$ with $|(x'', k'')|\leq k'^{c'}$, $c'\geq 1$, and $k''\leq c_a\cdot k'\leq c_a\cdot k/(c_a+c_d)<k$.
 Repeating the described procedure at most $k$ times produces an instance $(y,1)$ of $L$, solvable in polynomial time. 
\end{proof}
By \Cref{thm:strongdiminisher}, if we can prove a strong diminisher for a parameterized problem, then it does not admit a semi-strict polynomial kernel, unless~\PeqNP{}.
We give a strong diminisher for the \prob{Set Cover} problem: Given a set~$U$ called the universe, a family $\mathcal{F}\subseteq 2^U$ of subsets of $U$, and an integer $k$, the question is whether there are $k$ sets in the family $\mathcal{F}$ that cover the whole universe.
We show that \prob{Set Cover} parameterized by $k\log n$, where $n=|U|$, is strongly diminishable. 

\begin{thm}
 Unless \PeqNP{}, \pprob{Set Cover}{k \log n} and \pprob{Hitting Set}{k \log m} does not admit a semi-strict polynomial kernel.
\end{thm}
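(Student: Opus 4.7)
The plan is to apply \Cref{thm:strongdiminisher}. Both unparameterized problems are \NP{}-complete, and whenever the parameter $k\log n$ (respectively, $k\log m$) is bounded by a constant, both $k$ and the size of the set system are bounded by constants, so the restricted problem lies in \PTIME{}. It therefore suffices to exhibit a strong parameter diminisher for each of the two problems.

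For \pprob{Set Cover}{k\log n}, my proposal is a ``pair-and-merge'' diminisher. Given an instance $(U,\mathcal{F},k)$ with $|U|=n$, first dispose of the degenerate cases $k\le 1$ by solving in polynomial time and outputting a trivial equivalent instance with parameter~$0$. Otherwise, define $\mathcal{F}':=\{S_1\cup S_2 : S_1,S_2\in\mathcal{F}\}$ and output the instance $(U,\mathcal{F}',\lceil k/2\rceil)$. Correctness is immediate: any cover $T_1,\dots,T_k\in\mathcal{F}$ of $U$ yields the cover $T_1\cup T_2, T_3\cup T_4,\dots\in\mathcal{F}'$ of size $\lceil k/2\rceil$; conversely, each element of $\mathcal{F}'$ is a union of at most two sets from $\mathcal{F}$, so a cover of $U$ by $\lceil k/2\rceil$ sets from $\mathcal{F}'$ unfolds into a cover of $U$ by at most~$k$ sets from $\mathcal{F}$. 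The new family has at most $|\mathcal{F}|^2$ elements, so the construction runs in polynomial time. Since $\lceil k/2\rceil \le (2/3)\cdot k$ for every $k\ge 2$, the new parameter satisfies $\lceil k/2\rceil\log n \le \tfrac{2}{3}\cdot k\log n$, witnessing a strong diminisher with constant $c=3/2$.

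For \pprob{Hitting Set}{k\log m}, I would transfer the diminisher via the standard incidence duality: a Set Cover instance $(U,\mathcal{F},k)$ is equivalent to the Hitting Set instance on universe $\mathcal{F}$ with sets $\{F_u\}_{u\in U}$, where $F_u=\{S\in\mathcal{F}:u\in S\}$. This dual has precisely $|U|=n$ sets, so the Hitting Set parameter $k\log m$ equals the Set Cover parameter $k\log n$. Applying the dual in either direction is thus a parameter-preserving polynomial-time reduction, and a direct analogue of \Cref{lem:dimreduction} for strong diminishers---whose proof is identical, since composing a parameter-non-increasing reduction with a strong diminisher and another such reduction preserves the multiplicative drop by the factor $c$---then ports the strong diminisher from Set Cover to Hitting Set.

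The main obstacle I foresee is simply the bookkeeping around the fringe case $k\le 1$, where the pairing step does not decrease $k$ by a constant factor and so must be circumvented by the polynomial-time resolution guaranteed by the ``parameter bounded by a constant is in \PTIME{}'' hypothesis. The only other point to verify is the strong-diminisher analogue of \Cref{lem:dimreduction}, but this is a routine adaptation of its proof.
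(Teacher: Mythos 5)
Your construction is the right one---the paper's diminisher is exactly the same pair-and-merge idea---but your handling of odd $k$ is broken. You set $k' = \lceil k/2\rceil$, and the converse direction of your correctness argument asserts that a cover of $U$ by $\lceil k/2\rceil$ sets from $\mathcal{F}'$ unfolds into a cover by at most $k$ sets from $\mathcal{F}$. For odd $k$ the unfolding gives at most $2\lceil k/2\rceil = k+1$ sets, not $k$, and this slack is genuine: take $U=\{1,2,3,4\}$, $\mathcal{F}=\{\{1\},\{2\},\{3\},\{4\}\}$, and $k=3$. Then $(U,\mathcal{F},3)$ is a no-instance, yet $\mathcal{F}'$ contains $\{1,2\}$ and $\{3,4\}$, so $(U,\mathcal{F}',2)$ is a yes-instance---your map sends a no-instance to a yes-instance. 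Switching to $\lfloor k/2\rfloor$ does not help either, since then the forward direction fails (e.g.\ $U=\{1,2,3\}$ with singletons and $k=3$ is a yes-instance that no single pair covers).

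The paper sidesteps this by forcing parity before pairing: if $k$ is odd, add a fresh element to $U$, add the singleton containing it to $\mathcal{F}$, and set $k:=k+1$; this yields an equivalent instance with even $k$, so $k'=k/2$ is exact in both directions. The price is a slightly more careful parameter bound, $k'\log n' \le \frac{k+1}{2}\log(n+1) \le \frac{\sqrt{3}}{2}\,k\log n$ for $n\ge 5$, which still certifies a strong diminisher. Everything else in your proposal is sound: the polynomial blow-up $|\mathcal{F}'|\le|\mathcal{F}|^2$, the bound $\lceil k/2\rceil\le\frac23 k$ for $k\ge2$ (moot once you pad, but correct), the observation that the proof of \Cref{lem:dimreduction} goes through verbatim for strong diminishers, and the incidence duality transporting the diminisher to \pprob{Hitting Set}{k\log m}---the paper only says a diminisher can be built ``in a similar fashion,'' so your explicit duality route is a legitimate, and arguably cleaner, alternative. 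But as written, the Set Cover diminisher itself is incorrect on odd inputs and requires the padding step.
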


\begin{proof}
 Let $(U, \mathcal{F} = \{F_1, \ldots, F_m\}, k)$ be an instance of \pprob{Set Cover}{k\log n} and assume that~$k\geq 2$ and $n\geq 5$.
  If $k$ is odd, then we add a unique element to~$U$, a unique set containing only this element to~$\mathcal{F}$, and we set~$k=k+1$.
  Hence, we assume that~$k$~is even. 
  The following procedure is a strong parameter diminisher for the problem parameterized by $k\log n$. 
  Let $U' = U$ and for all~$F_i, F_j$ create $F'_{\{i, j\}} = F_i \cup F_j$. Let $\mathcal{F}' = \{F'_{\{i, j\}} \ | \ i\neq j\}$ and set $k' = k/2$.
  This yields the instance~$(U', \mathcal{F}', k')$ of \pprob{Set Cover}{k\log n} in polynomial time.
    \iflong{}In the following we \else{}It remains to \fi{}show that $(U, \mathcal{F}, k)$ is a yes-instance if and only if $(U', \mathcal{F}', k')$ is a yes-instance\iflong{}. Furthermore, we argue that \else{}, as well that \fi{}$k'\log n' < (k\log n)/c$ for some constant~$c>1$, where $n'=|U'|$.

  \toappendix
  {
    Assume that there is a set cover $\mathcal{C} \subseteq \mathcal{F}$ for~$U$ of size~$k$. 
    Let $\mathcal{C} = \{C_1, C_2, \ldots, C_k\}$ and let $\mathcal{C}' = \{C_1\cup C_2, C_3\cup C_4, \ldots, C_{k-1}\cup C_k\}$. 
    Then clearly $\mathcal{C}'\subseteq \mathcal{F}'$ is a set cover for~$U'$ of size~$k/2$. 
    
    Conversely, assume that there is a set cover $\mathcal{C}'$ for $U'$ of size $k/2$. 
    Let $\mathcal{C}' = \{C_1', C_2', \ldots, C_{k/2}'\}$, let $C_i'=C_i\cup C_{i'}$ and let $\mathcal{C} = \{C_1, C_{1'}, C_2, C_{2'},\ldots,C_{k/2},C_{k/2'}\}$. 
    Then clearly $\mathcal{C}\subseteq \mathcal{F}$ is a set cover for $U$ of size at most~$k$. 
    Furthermore, we have that $m' = {m \choose 2}$. It follows that
    \[k'\log n' \leq \frac{k+1}{2}\log (n+1) \stackrel{k\geq 2}{\leq} \frac{3k}{4}\log (n+1) \leq \frac{\sqrt{3}}{2}k\log ((n+1)^{\sqrt{3}/2}) \stackrel{n\geq 5}{\leq} \frac{\sqrt{3}}{2}k\log (n).\] 
    Note that in the first inequality, we consider the cases of $k$ being modified to be even.
    It follows that for $k>2$ the parameter decreases by at least a factor of $\sqrt{3}/{2}$ and for $k=2$ the parameter diminisher produces either a trivial yes- or a trivial no-instance.
  }
  
  It is easy to see that a strong parameter diminisher for \pprob{Hitting Set}{k \log m} can be constructed in a similar fashion.
\end{proof}

Seeking for strong parameter diminishers to exclude semi-strict polynomial kernelizations raises the question whether there are parameterized problems that are not strongly diminishable.
In the following, we additionally prove that under some complexity-theoretic assumptions, there are natural problems that do not admit strong parameter diminishers. Here we restrict ourselves to problems where we have a regular parameter diminisher.
The complexity-theoretic assumption we base our results on is the \emph{Exponential Time Hypothesis}, or ETH for short~\cite{ImpagliazzoP01,lokshtanov2011lower}.
The Exponential Time Hypothesis states that there is no algorithm for \prob{3-CNF-Sat} running in~$2^{o(n)}$ time, where $n$ denotes the number of variables. 

\begin{thm}
\label{thm:main2}%
Assuming \ETH{}, none of the following problems is strongly diminishable: \pprob{$k$-CNF-Sat}{n}, \pprob{Rooted Path}{k}, \pprob{Clique}{\Delta}, \pprob{Clique}{\tw}, and \pprob{Clique}{\bw}.
\end{thm}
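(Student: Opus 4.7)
The plan is to treat each problem in turn and, in each case, derive a running-time improvement that would contradict the ETH. The common scheme is: iterate a hypothetical strong diminisher $\calD$ (with constant $c>1$) some number $i$ of times, observe that the parameter shrinks to at most $k/c^i$ while (since $\calD$ is a polynomial-time algorithm, producing outputs of size at most $|x|^d$ for some constant $d$) the instance size grows to at most $|x|^{d^i}$, and then finish by running the best known algorithm for the problem on the reduced instance.

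For \pprob{$k$-CNF-Sat}{n}, I would combine the iterated $\calD$ with the trivial $2^n$ brute-force algorithm. Choosing $i=\Theta(\log\log n)$, the parameter shrinks to $n/\text{polylog}(n)$ while the instance size becomes $|x|^{d^i}=2^{\text{polylog}(n)}$. Brute force on the reduced instance then runs in $2^{n/\text{polylog}(n)}\cdot 2^{\text{polylog}(n)}=2^{o(n)}$ total time, directly contradicting the ETH.

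For \pprob{Rooted Path}{k} and each of the three clique variants \pprob{Clique}{\Delta}, \pprob{Clique}{\tw}, and \pprob{Clique}{\bw}, I would exploit that each problem admits a known $2^{O(p)}\cdot n^{O(1)}$-time FPT algorithm, where $p$ is the relevant parameter, together with an ETH-based lower bound of the form $2^{\Omega(p)}\cdot n^{O(1)}$ (for \pprob{Rooted Path}{k} via the ETH-lower bound for Hamiltonian Path on $n$-vertex graphs; for the width- and degree-parameterized Clique variants via the standard ETH-based reductions from $3$-SAT). Iterating $\calD$ a total of $i$ times and running the FPT algorithm yields a running time of the form $2^{O(p/c^i)}\cdot n^{O(d^i)}$. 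Choosing $i$ as a slowly growing function of $p$ (and $n$), for example $i=\Theta(\log\log p)$, makes the parameter-dependent factor $2^{o(p)}$ while keeping the $n$-dependence at most quasi-polynomial. Applied to the ETH-hard families used to derive the lower bound (for \pprob{Rooted Path}{k}, Hamiltonian-path instances with $k$ equal to the number of vertices; for the clique variants, the dense hard instances produced by the standard reductions), this yields a subexponential-time algorithm where none is possible, a contradiction.

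The main obstacle is the tension between the two effects of iterating $\calD$: the parameter shrinks geometrically, but the instance size grows by a polynomial factor with each application, so the number of iterations must remain sub-logarithmic in $n$ or else the polynomial form of the running time is lost. Balancing these two so that the resulting running time genuinely violates the ETH lower bound is the delicate step; it is cleanest for \pprob{$k$-CNF-Sat}{n}, where the ETH lower bound is purely in $n$ and brute force is available in $2^n$, and somewhat more intricate for the other problems, where one must use that on the ETH-hard instances the parameter $p$ is proportional to the input size, so that the quasi-polynomial overhead in $n$ is still absorbed into $2^{o(p)}$.
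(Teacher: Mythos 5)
Your proposal is correct and takes essentially the same approach as the paper: iterate the hypothetical strong diminisher roughly $\log\log n$ times to shrink the parameter by a $\text{polylog}(n)$ factor while the instance size blows up only quasi-polynomially, run the known $2^{O(p)}\cdot n^{O(1)}$ algorithm, and apply the result to ETH-hard instance families where the parameter is $\Theta(n)$. The only difference is presentational: the paper packages the balancing step as a general lemma (\Cref{lem:strdim}) with a tunable function $f$ and then plugs in $f=\log n$, whereas you carry out the same computation inline with the explicit choice $i=\Theta(\log\log n)$.
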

We show the statements of~\Cref{thm:main2} with several
propositions.
The following lemma is the key tool for excluding strong parameter diminishers under \ETH{}.
Roughly, it can be understood as saying that a strong parameter diminisher can improve the running time of existing algorithms.

\begin{lem}%
\label{lem:strdim}
  Let $L$ be a parameterized problem.
  If there is an algorithm $\calA$ that solves any instance~$(x,k)\in L$ in $2^{O(k)}\cdot |x|^{O(1)}$ time and $L$ is strongly diminishable, 
  then there is an algorithm~$\calB$ that solves $L$ in $2^{O(k/ f(x, k))}\cdot |x|^{f(x, k)^{O(1)}}$ time, where $f:L\to \N$ is a function mapping instances of $L$ to the natural numbers with the following property: For every constant $c$ there is a natural number $n$ such that for all instances $(x, k)\in L$ we have that $|x|\ge n$ implies that $f(x, k)\ge c$.
\end{lem}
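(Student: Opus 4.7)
The plan is to combine the given exponential algorithm $\calA$ with repeated applications of the strong parameter diminisher $\calD$, where the number of iterations depends (mildly) on the input size $|x|$. Intuitively, if we apply $\calD$ some number $t$ times, the parameter shrinks by a multiplicative factor $c_d^t$ while the instance size only grows polynomially with a degree that itself grows with $t$. The trick is to choose $t$ slowly growing in $|x|$ so that both the exponential-in-parameter factor and the polynomial-in-size blow-up can be absorbed into the claimed bound $2^{O(k/f)} \cdot |x|^{f^{O(1)}}$.

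Let $c_d \geq 2$ be the constant of the strong diminisher (recall the remark that we may assume this), and let $a \geq 1$ be a constant such that $\calD$ on input of size $n$ runs in time $n^a$ (and hence outputs an instance of size at most $n^a$). First I would set $t(x, k)$ to be any polynomial-time computable function that tends to infinity with $|x|$, for instance $t(x, k) := \max\{1, \lfloor \log_{c_d} \log |x| \rfloor\}$. The algorithm $\calB$ then applies $\calD$ exactly $t(x, k)$ times to obtain an equivalent instance $(x_t, k_t)$, and finally runs $\calA$ on that instance.

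A straightforward induction on $t$ gives $|x_t| \leq |x|^{a^t}$ and $k_t \leq k / c_d^t$. Setting $f(x, k) := c_d^{t(x, k)}$, the parameter bound becomes $k_t \leq k / f(x, k)$, while the size bound becomes $|x_t| \leq |x|^{a^t} = |x|^{f(x,k)^{\log_{c_d} a}}$, which fits in $|x|^{f(x,k)^{O(1)}}$. The call to $\calA$ therefore runs in time $2^{O(k_t)} \cdot |x_t|^{O(1)} = 2^{O(k/f(x,k))} \cdot |x|^{f(x,k)^{O(1)}}$, and the cost of the $t$ iterations of $\calD$ is dominated by the cost of producing $x_t$, which is again $|x|^{f(x,k)^{O(1)}}$. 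The required property of $f$ holds because $t(x,k) \to \infty$ as $|x| \to \infty$, so $f(x, k) = c_d^{t(x,k)}$ eventually exceeds any constant $c$.

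The main delicate point, I expect, is controlling the polynomial-degree blow-up caused by iterating $\calD$: the degree is $a^t$, which is only polynomial in $f = c_d^t$ because $a^t = f^{\log_{c_d} a}$, but this identity crucially relies on $t$ being coupled to $f$ through $f = c_d^t$. In particular, $t$ must not grow faster than logarithmically in $|x|$, for otherwise even a single call to $\calA$ on $x_t$ would not fit into $|x|^{f^{O(1)}}$. A minor bookkeeping issue is that the intermediate instance sizes $|x|^{a^0}, |x|^{a^1}, \dots, |x|^{a^{t-1}}$ and the time spent in each diminisher call form a geometric sequence (in the exponent) dominated by its last term, so the total time indeed stays within the claimed bound.
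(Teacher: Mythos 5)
Your proof is correct and takes essentially the paper's approach: apply the strong diminisher roughly $\log f$ times and then run $\calA$; the only difference is that the paper fixes $f$ and derives the number of iterations from it (applying $\calD$ exactly $\lceil\log f\rceil$ times, with the logarithm taken to a small constant base) whereas you fix $t$ first and define $f := c_d^t$, and these are equivalent once you note, as you do, that $t$ may be any polynomial-time computable function tending to infinity — to recover a prescribed $f$, simply take $t = \lceil\log_{c_d} f\rceil$. One small correction to your closing remark: the bound $|x|^{f^{O(1)}}$ in fact holds for \emph{every} choice of $t$, since $a^t = f^{\log_{c_d} a}$ is $f^{O(1)}$ no matter how fast $t$ grows; the requirement that $t$ be at most logarithmic in $|x|$ matters only for making the resulting running time useful in the downstream ETH applications, not for the validity of the lemma itself.
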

{
  \newcommand{\cmin}{c}
  \begin{proof}
    Let $L$ be a parameterized problem. 
    Let $\calA$ be an algorithm that solves any instance~$(x,k)\in L$ in~$2^{c_1\cdot k}\cdot |x|^{c_2}$~time with constants $c_1,c_2>0$ and let~$\calD$ be a strong parameter diminisher for $L$ with constant~$d\geq 2$.
    Recall that by definition of a strong parameter diminisher, the size of the instance grows at most polynomially each time $\calD$ is applied. 
    Let $b\geq 1$ be a constant such that the size of the instance obtained by applying $\calD$ once to  $(x,k)$ is upper-bounded by $|x|^b$.
    We set~$\cmin:=\min\{2,b\}$.
    Let~$f:L\to\N$ be a function such that $f(x, k)\ge c$ for all $(x, k)\in L$ with $|x|\ge c_0$ for some constant~$c_0\in\N$. 

    Let $(x',k')$ be the instance of $L$ obtained by applying $\calD$ $\lceil\log_{\cmin} f(x,k)\rceil$ times to instance~$(x,k)\in L$ with $|x|\geq c_0$.
    We obtain 
    \[ |x'| \leq |x|^{b^{\lceil\log_{\cmin} f(x,k)\rceil}} \leq |x|^{b^{2\log_{\cmin} f(x,k)}} \leq |x|^{f(x,k)^{c_3}},\, \text{for some constant $c_3\geq 1$.} \]
    Furthermore, the parameter decreases by the constant factor~$d$ each time the diminisher is applied, hence 
    \[ k' = k/d^{\lceil\log_{\cmin} f(x,k)\rceil}\leq  k/d^{\log_{\cmin} f(x,k)} \leq k/d^{\log_{d} f(x,k)}\leq k/f(x,k) . \] 
    Finally, applying $\calA$ on $(x', k')$ solves $(x',k')$ in time 
    \[2^{c_1\cdot k'}\cdot |x'|^{c_2} \leq  2^{c_1\cdot k/f(x,k)}\cdot |x|^{c_2\cdot f(x,k)^{c_3}} \in 2^{O(k/ f(x,k))}\cdot |x|^{f(x,k)^{O(1)}}.\qedhere\] %
  \end{proof}
}
Intuitively, we apply \Cref{lem:strdim} to exclude the existence of strong parameter diminishers under the above mentioned complexity-theoretic assumptions as follows.
Consider a problem where we know a running time lower bound for any algorithm based on the \ETH{} and we also know an algorithm that matches this lower bound.
Then, due to \Cref{lem:strdim}, for many problems a strong parameter diminisher and a suitable choice for the function $f$ would imply the existence of an algorithm which has a running time that breaks the lower bound.

Chen, Flum, and M\"{u}ller~\cite{chen2011lower} showed that \pprob{$k$-CNF-Sat}{n} and \pprob{Rooted Path}{k} are diminishable. We show that we cannot obtain strong diminishability for these problems unless the \ETH{} breaks. 
  Recall \pprob{$k$-CNF-Sat}{n}, the problem of deciding whether a given Boolean formula with $n$ variables in conjunctive normal form and with at most $k$ literals in each clause is satisfiable.

  \begin{proposition}
  Assuming \ETH{}, \pprob{$k$-CNF-Sat}{n} is not strongly diminishable.
  \end{proposition}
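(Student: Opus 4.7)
The plan is to apply \Cref{lem:strdim} to derive a running-time contradiction with \ETH{}. The trivial brute-force algorithm (enumerating all assignments) solves every instance of $k$-CNF-Sat in $2^n \cdot |x|^{O(1)}$ time, so if \pprob{$k$-CNF-Sat}{n} were strongly diminishable, then by \Cref{lem:strdim} there would exist an algorithm $\calB$ solving every instance $(x,n)$ in
\[
2^{O(n/f(x,n))} \cdot |x|^{f(x,n)^{O(1)}}
\]
time, for a function $f$ that diverges as $|x|\to\infty$.

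I would then specialize to $3$-CNF-Sat instances, which form a subclass of $k$-CNF-Sat (with the parameter $n$ being the number of variables in both) and for which the encoding size satisfies $|x|=n^{O(1)}$. Instantiating $f(x,n):=\lceil\log(|x|+2)\rceil$, which clearly satisfies the divergence requirement of \Cref{lem:strdim}, the running time of $\calB$ on such inputs becomes
\[
2^{O(n/\log n)} \cdot n^{O((\log n)^c)} \;=\; 2^{O(n/\log n)} \cdot 2^{O((\log n)^{c+1})} \;=\; 2^{o(n)},
\]
for some constant $c$. This yields a subexponential-time algorithm for $3$-CNF-Sat, contradicting \ETH{}.

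The only subtlety is choosing $f$ so that both factors $2^{O(n/f)}$ and $|x|^{f^{O(1)}}$ are simultaneously $2^{o(n)}$: the first requires $f\to\infty$, the second requires $f^{O(1)}\cdot \log|x|=o(n)$. Since $|x|$ is polynomial in $n$ on $3$-CNF inputs, any slow-growing unbounded $f$ (such as $\log|x|$) balances these two demands, and no real obstacle arises. I expect the same template---brute force in $2^{O(n)}|x|^{O(1)}$ combined with \Cref{lem:strdim} plus a polynomial bound on $|x|$ in the parameter---to handle the remaining problems in \Cref{thm:main2} as well.
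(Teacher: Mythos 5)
Your proof is correct and takes essentially the same route as the paper: the paper invokes Lemma~\ref{lem:strdim} with the brute-force $O^*(2^n)$ algorithm and $f(\phi)=\log n$, while you use $f(x,n)=\lceil\log(|x|+2)\rceil$ and then make explicit the restriction to $3$-CNF (where $|x|=n^{O(1)}$, so the two choices of $f$ coincide up to constants). Your version spells out the balancing of the two running-time factors and the ETH specialization a bit more carefully, but the underlying argument is identical.
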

  \begin{proof}
\prob{$k$-CNF-Sat} can be solved in $O^*(2^{n})$ time via a brute-force algorithm~$\calA$, but does not admit a~$2^{o(n)}$ time algorithm under the~\ETH{}.
    By \Cref{lem:strdim} with algorithm~$\calA$ and~$f(\phi)=\log n$, \pprob{$k$-CNF-Sat}{n} does not admit a strong parameter diminisher unless the~\ETH{} breaks.
  \end{proof}

\begin{proposition}
 Assuming \ETH{}, \pprob{Rooted Path}{k} is not strongly diminishable.
\end{proposition}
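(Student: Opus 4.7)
The plan is to mimic the strategy used for \pprob{$k$-CNF-Sat}{n} and invoke \Cref{lem:strdim} with a carefully chosen function $f$, but with \pprob{Rooted Path}{k} playing the role of the hard problem and the Hamiltonian Path lower bound under \ETH{} playing the role of the complexity assumption. Concretely, I would first recall that \pprob{Rooted Path}{k} admits an algorithm of running time $2^{O(k)} \cdot n^{O(1)}$ (via color coding, as for \pprob{Multicolored Path}{k}; one simply enforces that the chosen path starts at the root $r$). This is the ``fast'' algorithm $\calA$ that \Cref{lem:strdim} will accelerate.

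Next, I would apply \Cref{lem:strdim} with the function $f(G,r,k) := \lceil \log n \rceil$, where $n$ is the number of vertices of $G$. This $f$ clearly satisfies the required growth property, namely that for every constant $c$ there is a threshold $n_0$ such that $|x| \geq n_0$ implies $f(G,r,k) \geq c$. Under the assumption that \pprob{Rooted Path}{k} is strongly diminishable, \Cref{lem:strdim} then yields an algorithm that decides any instance $(G,r,k)$ in time
\[ 2^{O(k/\log n)} \cdot n^{(\log n)^{O(1)}}. \]

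The final step is to contradict \ETH{} through \prob{Hamiltonian Path}. It is well-known that, under \ETH{}, \prob{Hamiltonian Path} on $n$-vertex graphs cannot be solved in $2^{o(n)}$ time. On the other hand, \prob{Hamiltonian Path} reduces trivially to $n$ instances of \pprob{Rooted Path}{k}: guess the starting vertex $r$ of the Hamiltonian path (there are $n$ choices) and call the hypothetical algorithm on $(G,r,n-1)$. Plugging $k = n-1$ into the running time above gives
\[ n \cdot 2^{O((n-1)/\log n)} \cdot n^{(\log n)^{O(1)}} = 2^{O(n/\log n)} \cdot 2^{(\log n)^{O(1)}}, \]
which is $2^{o(n)}$, contradicting \ETH{}.

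I do not anticipate any serious obstacle: the only points that need care are (i) verifying that the standard color-coding algorithm for $k$-path can indeed be tuned to enforce the root constraint while preserving the $2^{O(k)} \cdot n^{O(1)}$ bound, and (ii) confirming the arithmetic showing that $2^{O(n/\log n)} \cdot n^{(\log n)^{O(1)}}$ is subexponential in $n$ (which follows because $n^{(\log n)^{O(1)}} = 2^{(\log n)^{O(1)+1}}$ and both factors are $2^{o(n)}$). Everything else is a direct plug-in into \Cref{lem:strdim} together with the \ETH{}-based lower bound for \prob{Hamiltonian Path}.
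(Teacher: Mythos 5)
Your proof is correct and follows essentially the same route as the paper's: invoke \Cref{lem:strdim} with the color-coding algorithm of Alon, Yuster, Zwick and $f=\log n$, and then contradict the \ETH{}-hardness of \prob{Hamiltonian Path}. The only cosmetic difference is the reduction from \prob{Hamiltonian Path} to \pprob{Rooted Path}{k}: the paper adds a single universal vertex as the root (one call with $k=n$), whereas you iterate over all $n$ choices of the root (so $n$ calls with $k=n-1$); both are trivial and the extra factor of $n$ is absorbed into the $2^{o(n)}$ bound, so the arguments are interchangeable.
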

\begin{proof}
 \prob{Hamiltonian Path} on an $n$-vertex graph reduces trivially to \prob{Rooted Path} by adding a universal vertex and taking it as the root and setting the length of the path $k=n$.
 As \prob{Hamiltonian Path} does not admit a $2^{o(n)}$ time algorithm unless the~\ETH{} breaks~\cite{lokshtanov2011lower}, \prob{Rooted Path} does not admit a~$2^{o(n)}$ time algorithm unless the~\ETH{} breaks.
 There is an algorithm for \pprob{Rooted Path}{k} running in $2^{O(k)}\poly(n)$ time~\cite{AlonYZ95}. Let $(G=(V,E), k)$ be an instance of \pprob{Rooted Path}{k} and set~$f(G, k)=\log(|V|)=\log n$. By \Cref{lem:strdim} we get an algorithm for \pprob{Rooted Path}{k} running in~$2^{O(k/\log n)}\cdot |G|^{(\log n)^{O(1)}}\in 2^{o(n)}$ time.
 Hence, \pprob{Rooted Path}{k} does not admit a strong parameter diminisher unless the~\ETH{} breaks.
\end{proof}

{

  Next, we show that for most parameterizations we considered, \prob{Clique} does not admit a strong parameter diminisher unless the~\ETH{} breaks. 

  \begin{proposition}%
  \label{prop:ethcliquebw}
  Assuming \ETH{}, \pprob{Clique}{\bw} 
  is 
  not strongly diminishable.
  \end{proposition}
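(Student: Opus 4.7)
The plan is to apply Lemma~\ref{lem:strdim} with $f(G,k)=\log|V(G)|$, so that a hypothetical strong parameter diminisher would turn the known single-exponential FPT algorithm for \pprob{Clique}{\bw} into a genuinely subexponential algorithm for \prob{Clique} on the full input, contradicting \ETH{}.

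First I would observe that \pprob{Clique}{\bw} is solvable in $2^{O(\bw)}\cdot n^{O(1)}$ time: in any bandwidth layout every clique lies inside a window of at most $\bw+1$ consecutive vertices, so it suffices to enumerate the subsets of each such window (equivalently, use $\omega(G)\le\bw(G)+1$ and inspect the bags of the path decomposition induced by the layout). Second, plugging this algorithm together with the choice $f(G,k)=\log n$ into Lemma~\ref{lem:strdim}, a strong parameter diminisher for \pprob{Clique}{\bw} would imply an algorithm solving the problem in time $2^{O(\bw/\log n)}\cdot n^{(\log n)^{O(1)}}$.

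Third, I would invoke the standard \ETH{}-based lower bound for \prob{Clique}: by the Sparsification Lemma, \prob{3-CNF-Sat} on $n$ variables admits no $2^{o(n)}$-time algorithm even when restricted to instances with $O(n)$ clauses, and the textbook reduction from \prob{3-CNF-Sat} to \prob{Clique} produces an equivalent graph on $O(n)$ vertices, so \prob{Clique} on $N$-vertex graphs has no $2^{o(N)}$-time algorithm under \ETH{}. Since trivially $\bw(G)\le|V(G)|-1$ for every graph, the algorithm obtained in the previous step would solve \prob{Clique} on an arbitrary $N$-vertex instance in time
\[ 2^{O(N/\log N)}\cdot N^{(\log N)^{O(1)}}\;\subseteq\;2^{o(N)}, \]
contradicting \ETH{}.

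The main obstacle is purely bookkeeping, namely verifying that the bound produced by Lemma~\ref{lem:strdim} really collapses to $2^{o(N)}$ even in the worst case $\bw=\Theta(N)$; this reduces to noting that both the factor $2^{O(N/\log N)}$ and the quasipolynomial factor $N^{(\log N)^{O(1)}}=2^{(\log N)^{O(1)}}$ lie in $2^{o(N)}$. Once the single-exponential upper bound and the \ETH{} lower bound for \prob{Clique} are in hand, the proof is a direct invocation of Lemma~\ref{lem:strdim}.
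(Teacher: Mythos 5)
Your proposal is correct and follows essentially the same route as the paper: solve \prob{Clique} in $O^*(2^{\bw})$ time, apply Lemma~\ref{lem:strdim} with $f(G,k)=\log n$, note $\bw\in O(n)$, and conclude a $2^{o(n)}$-time algorithm for \prob{Clique} contradicting \ETH{}. The paper states this more tersely, while you spell out the windowing argument for the $O^*(2^{\bw})$ algorithm and the final bookkeeping that $2^{O(n/\log n)}\cdot n^{(\log n)^{O(1)}}\subseteq 2^{o(n)}$, but the approach, the choice of $f$, and the invocation of the lemma are identical.
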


  {
    \begin{proof}
      \prob{Clique} can be solved in $O^*(2^{\bw})$ time via a dynamic programming (brute-force) algorithm~$\calA$, but does not admit a $2^{o(n)}$ time algorithm under \ETH{}~\cite{lokshtanov2011lower}. Note that $\bw(G)\in O(n)$.
      By \Cref{lem:strdim} with algorithm $\calA$ and $f(G, k)=\log(|V|)=\log n$, \pprob{Clique}{\bw} does not admit a strong parameter diminisher unless the~\ETH{} breaks. %
    \end{proof}
  }

  Since we have that $\Delta \le \tw \le \bw$, we get the following corollary. Note that we do not obtain this result for \pprob{Clique}{\cw}, since $\cw(G)\in O(n^2)$, where $n$ is the number of vertices of $G$.
  \begin{corollary}
  Assuming \ETH{}, \pprob{Clique}{\Delta} and \pprob{Clique}{\tw} are not strongly diminishable.
  \end{corollary}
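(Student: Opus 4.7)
The plan is to replicate the argument of Proposition~\ref{prop:ethcliquebw} verbatim for each of the two parameterizations. The only two ingredients needed per parameter~$p \in \{\Delta, \tw\}$ are (i) an algorithm that solves \prob{Clique} in $2^{O(p)} \cdot n^{O(1)}$ time, and (ii) the bound $p \le n-1$. Given these, Lemma~\ref{lem:strdim} combined with the ETH-based $2^{o(n)}$ lower bound for \prob{Clique}~\cite{lokshtanov2011lower} immediately yields the contradiction, exactly as for bandwidth.

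For \pprob{Clique}{\Delta}, I would use the trivial $O^*(2^{\Delta+1})$ algorithm: every clique containing a vertex~$v$ lies inside $N_G[v]$, which has at most $\Delta+1$ vertices, so brute-force over all subsets of each closed neighborhood suffices. For \pprob{Clique}{\tw}, I would use the fact that $\omega(G) \le \tw(G)+1$: if $k > \tw(G)+1$ answer NO trivially, otherwise run the standard $O^*(2^{\tw+1})$ tree-decomposition dynamic program (any clique is contained in some bag).

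Assuming, for contradiction, that \pprob{Clique}{\Delta} admits a strong parameter diminisher, I apply Lemma~\ref{lem:strdim} with the above $2^{O(\Delta)}$-time algorithm and the function $f(G,k) := \log n$. This produces an algorithm for \prob{Clique} running in time
\[
2^{O(\Delta / \log n)} \cdot n^{(\log n)^{O(1)}} \subseteq 2^{O(n/\log n)} \cdot n^{(\log n)^{O(1)}} \subseteq 2^{o(n)},
\]
contradicting ETH. The argument for \pprob{Clique}{\tw} is verbatim the same, with $\tw$ replacing $\Delta$ and using $\tw \le n-1$.

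There is essentially no obstacle here: the only thing to verify is that the two FPT algorithms really do have the exponential part single-exponential in the parameter (not, for instance, $2^{\Delta \log \Delta}$), which is immediate in both cases. The inequalities $\Delta \le n$ and $\tw \le n$ ensure that the choice $f = \log n$ makes the application of Lemma~\ref{lem:strdim} go through unchanged, so the corollary follows with no additional work beyond what was already done for the bandwidth case.
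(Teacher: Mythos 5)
Your argument is correct and is the fully worked-out version of what the paper does in a single line. The paper justifies the corollary with the remark ``since $\Delta \le \tw \le \bw$,'' appealing to Proposition~\ref{prop:ethcliquebw}, but that remark does not go through on its own: the inequality $\Delta \le \tw$ is false in general (a star $K_{1,n}$ has $\Delta = n$ and $\tw = 1$), and in any case a pointwise comparison of parameter values does not by itself transfer nonexistence of a strong diminisher from one parameterization to another. What actually makes the corollary work is exactly what you supply: for each $p \in \{\Delta, \tw\}$ a $2^{O(p)}\cdot n^{O(1)}$-time algorithm for \prob{Clique} (brute force over each closed neighborhood for $\Delta$; the bound $\omega(G)\le\tw(G)+1$ plus inspecting the bags of a width-$\tw$ tree decomposition for $\tw$), together with the trivial bound $p \le n-1$, after which \Cref{lem:strdim} with $f=\log n$ yields a $2^{O(n/\log n)}\cdot n^{(\log n)^{O(1)}} = 2^{o(n)}$-time algorithm, contradicting \ETH{} exactly as in the bandwidth case. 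So your route is the one the paper intends; you have simply made explicit the two FPT algorithms the paper takes for granted, and in doing so you sidestep the dubious inequality chain by using $p\le n-1$ directly.
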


}

\iflong{}%
Finally, note that it is not hard to observe that 
if we can exclude a strong parameter diminisher for a problem~$L$ parameterized by~$k$ under ETH, 
then we can exclude a parameter diminisher for~$L$ parameterized by~$\log k$ under~ETH.
Thus, it would be interesting to know whether there is a way to 
exclude the existence of parameter diminishers avoiding this exponential gap between the 
parameterizations.
\fi{}

\section{Conclusion}
\looseness=-1
\label{sec:conclusion}

We showed that for several natural problems a %
strict polynomial-size problem
kernel is as likely as~\PeqNP{}. 
Since basically all observed 
(natural and practically relevant) polynomial kernels are 
strict, this reveals 
that the existence of valuable kernels may be tighter connected to 
the P vs.\ NP problem than previously 
expected (in almost all previous work a 
connection is drawn to a collapse of the 
polynomial hierarchy to its third level, and the conceptual framework used
there seems more technical than the one used here).
Our work has been triggered by results of 
Chen, Flum, and M\"uller~\cite{chen2011lower} and shows %
that their basic ideas 
can be extended to a larger class of problems than dealt with 
in their work.

Our work leaves several challenges for future work.
Of course, it would be desirable to find natural problems, where the presented framework is able to refute strict polynomial kernels while the framework of Bodlaender et al.~\cite{BodlaenderDFH09} is not. 
However, it is not clear whether a framework based on a weaker assumption is even able to produce results that a framework based on a stronger assumption is not able to produce. 
This possibly also ties in with the question whether there are parameterized problems that admit a polynomial kernel but no strict polynomial kernel. 
In the following, we list some concrete open problems:
\begin{compactitem}
\item We proved that \pprob{Multicolored Path}{k} is diminishable (and thus 
does not admit a strict polynomial kernel unless \PeqNP{}). Can this
result be extended to the uncolored version of the problem?
This is also open for the directed case.
\iflong{}
  \item Is \pprob{Connected Vertex Cover}{k} diminishable?
\else{}
  \item Is \pprob{Connected Vertex Cover}{k} or \pprob{Hitting Set}{n}
  diminishable?%
\fi{}
  \item Is \pprob{Internal Steiner Tree}{k+|T|}~\cite{HuangLGH13} diminishable? 
  \item \pprob{Clique}{\Delta}, \pprob{Clique}{\tw}, 
\pprob{Clique}{\bw} do not have strong diminishers under the \ETH{} (\Cref{sec:nodiminishers}). 
Is this also true for \pprob{Clique}{\cw}? 
\end{compactitem}

\bibliographystyle{plainurl}
\bibliography{diminisher}

\newpage
\appendix

\section{Problem Zoo}
\label{sec:probzoo}
\decprob{Biclique}
{An undirected bipartite graph $G=(V=A\uplus B,E)$ and an integer~$k$.}
{Is there a vertex set $X\subseteq V$ of $G$ such that $|X\cap A|=|X\cap B|= k$ and all vertices in $X\cap A$ are adjacent with all vertices in $X\cap B$?}

\decprob{Clique}
{An undirected graph $G=(V,E)$ and an integer~$k$.}
{Is there a vertex set $X\subseteq V$ of $G$ such that $|X|\geq k$ and all vertices in $X$ are pairwise adjacent in $G$?}

\decprob{CNF Sat}
{Given a Boolean formula~$\phi$ in conjunctive normal form (CNF).}
{Is $\phi$ satisfiable?}

\decprob{Colorful Graph Motif}
{An undirected graph $G=(V,E)$, an integer $k$, and a vertex coloring function $\col:V \to \{1,\ldots,k\}$.}
{Is there a vertex set~$X\subseteq V$ such that $G[X]$ is connected and $X$ contains exactly one vertex of each color?}

\decprob{Connected Vertex Cover}
{An undirected graph $G=(V,E)$ and an integer~$k$.}
{Is there a vertex set $X\subseteq V$ in $G$ with $|X|\le k$ and each edge of $G$ is incident to at least one vertex in $X$ and $G[X]$ is connected?}

\decprob{Directed Path}
{A directed graph~$G=(V,E)$ and an integer~$k$.}
{Is there a simple directed path $P$ of length at least~$k$ in~$G$?}

\decprob{Hamiltonian Path}
{An undirected graph~$G=(V,E)$.}
{Is there a path in $G$ that visits each vertex exactly once?}

\decprob{Hitting Set}
{Given a universe $U$, a family $\mathcal{F}\subseteq 2^U$ of subsets of $U$, and an integer~$k$.}
{Is there a subset $U'\subseteq U$ such that $|U'|\leq k$ and $F\cap U'\neq \emptyset$ for all $F\in\mathcal{F}$?}

\decprob{Independent Set}
{An undirected graph $G=(V,E)$ and an integer~$k$.}
{Is there a vertex set $X\subseteq V$ of $G$ such that $|X|\geq k$ and $G[X]$ is edge-free?}

\decprob{Internal Steiner Tree}
{An undirected graph $G=(V=N\uplus T,E)$ and an integer~$k$.}
{Is there a subgraph~$H$ of~$G$ such that $H$ is a tree with~$T$ being part of its internal vertices?}

\decprob{$k$-CNF Sat}
{Given a Boolean formula~$\phi$ in conjunctive normal form (CNF) with at most $k$ literals in each clause.}
{Is $\phi$ satisfiable?}

\decprob{Multicolored Path}
{An undirected graph~$G=(V,E)$ and a vertex coloring function $\col:V \to \{1,\ldots,k\}$.}
{Is there a simple path $P$ in $G$ that contains exactly one vertex of each color?}

\decprob{Path}
{An undirected graph~$G=(V,E)$ and an integer~$k$.}
{Is there a simple path $P$ of length at least~$k$ in~$G$?}

\decprob{Rooted Path}
{An undirected graph~$G=(V,E)$, an integer~$k$, and a root vertex $r\in V$.}
{Is there a simple path $P$ of length at least~$k$ in~$G$ starting at root vertex~$r$?}

\decprob{Set Cover}
{Given a universe $U$, a family of sets $\mathcal{F}\subseteq 2^U$, and an integer~$k$.}
{Is there a subset $\mathcal{C}\subseteq \mathcal{F}$ such that $|\mathcal{C}|\leq k$ and $U=\bigcup_{C\in \mathcal{C}} C$?}

\decprob{Short Binary NTM Computation}
{Given a single-tape, single-head, non-deterministic Turing machine $M=(\Sigma, Q, q_0, F, \delta)$, where $\Sigma$ is an alphabet of size two, that is $|\Sigma| = 2$, $Q$ is the set of states, $q_0\in Q$ is the initial state, $F\subseteq Q$ are the accepting states, and $\delta \subseteq ((\Sigma\cup\{\square\}) \times Q)\times (\Sigma\times Q\times \{-1, 0, 1\})$ is the transition relation, where $\square$ denotes the blank symbol in an empty cell, a word $x\in\Sigma^*$, and an integer $k$ unary encoding.}
{Does $M$ reach an accepting state after $k$ computation steps if the word initially written on the tape is $x$?}

\decprob{Short NTM Computation}
{Given a single-tape, single-head, non-deterministic Turing machine $M=(\Sigma, Q, q_0, F, \delta)$, where $\Sigma$ is the alphabet, $Q$ is the set of states, $q_0\in Q$ is the initial state, $F\subseteq Q$ are the accepting states, and $\delta \subseteq ((\Sigma\cup\{\square\}) \times Q)\times (\Sigma\times Q\times \{-1, 0, 1\})$ is the transition relation, where $\square$ denotes the blank symbol in an empty cell, a word $x\in\Sigma^*$, and an integer $k$ unary encoding.}
{Does $M$ reach an accepting state after $k$ computation steps if the word initially written on the tape is $x$?}

\decprob{Terminal Steiner Tree}
{An undirected graph $G=(V=N\uplus T,E)$ and an integer~$k$.}
{Is there a subgraph~$H$ of~$G$ such that $H$ is a tree with~$T$ being its set of leaves?}

\decprob{Vertex Cover}
{An undirected graph $G=(V,E)$ and an integer~$k$.}
{Is there a vertex set $X\subseteq V$ in $G$ with $|X|\le k$ and each edge of $G$ is incident to at least one vertex in $X$?}

\end{document}